
\documentclass{article}
\usepackage{microtype}
\usepackage{graphicx}
\usepackage{subfig}
\usepackage{graphicx}
\usepackage{float}
\usepackage{multirow}
\usepackage{bm}
\usepackage[figuresright]{rotating}
\usepackage{booktabs} 

\usepackage{hyperref}
\usepackage{textcomp}
\usepackage{stfloats}
\usepackage{url}
\usepackage{bbding}
\usepackage{color}
\usepackage{verbatim}
\usepackage{hyperref}
\hypersetup{
hidelinks,
colorlinks=true,
linkcolor=red,
citecolor=cyan,
urlcolor = magenta
}
\usepackage{orcidlink}
\usepackage{graphicx}
\usepackage{cite}
\usepackage{microtype}
\usepackage{graphicx}
\usepackage{subfig}
\usepackage{graphicx}
\usepackage{float}
\usepackage{multirow}
\usepackage{bm}
\usepackage{booktabs}
\usepackage{amsmath}
\usepackage{amssymb}
\usepackage{mathtools}
\usepackage{amsthm}



\usepackage[accepted]{icml2023}

\usepackage{amsmath}
\usepackage{amssymb}
\usepackage{mathtools}
\usepackage{amsthm}

\usepackage[capitalize,noabbrev]{cleveref}

\theoremstyle{plain}
\newtheorem{theorem}{Theorem}[section]

\theoremstyle{definition}

\theoremstyle{remark}

\usepackage[textsize=tiny]{todonotes}

\icmltitlerunning{MLIC++: Linear Complexity Multi-Reference Entropy Modeling for Learned Image Compression}

\begin{document}

\twocolumn[
\icmltitle{MLIC$^{++}$: Linear Complexity Multi-Reference Entropy Modeling for Learned Image Compression}




\begin{icmlauthorlist}
\icmlauthor{Wei Jiang}{pku}
\icmlauthor{Jiayu Yang}{pcl}
\icmlauthor{Yongqi Zhai}{pku,pcl}
\icmlauthor{Feng Gao}{pkuart}
\icmlauthor{Ronggang Wang}{pku,pcl}
\\\tt{wei.jiang1999@outlook.com, rgwang@pkusz.edu.cn}
\end{icmlauthorlist}

\icmlaffiliation{pku}{Guangdong Provincial Key Laboratory of Ultra High Definition Immersive Media Technology, Peking
University Shenzhen Graduate School}
\icmlaffiliation{pkuart}{School of Arts, Peking University}
\icmlaffiliation{pcl}{Pengcheng Laboratory}

\icmlcorrespondingauthor{Ronggang Wang}{rgwang@pkusz.edu.cn}

\icmlkeywords{Machine Learning, ICML}

\vskip 0.3in
]



\printAffiliationsAndNotice{} 

\begin{abstract}
  The latent representation in learned image compression encompasses channel-wise,
    local spatial, and global spatial correlations, which are essential 
    for the entropy model to capture for conditional entropy minimization.
    Efficiently capturing these contexts within a single entropy model,
    especially in high-resolution image coding, presents a challenge
    due to the computational complexity of existing global context modules.
    {To address this challenge, we propose the Linear Complexity Multi-Reference Entropy Model (MEM$^{++}$). 
    Specifically, the latent representation is partitioned into multiple slices. }
    {For channel-wise contexts, previously compressed slices serve as the context for compressing a particular slice. 
    For local contexts, we introduce a shifted-window-based checkerboard attention module. 
    This module ensures linear complexity without sacrificing performance. 
    For global contexts, we propose a linear complexity attention mechanism. 
    It captures global correlations by decomposing the softmax operation, 
    enabling the implicit computation of attention maps from previously decoded slices.}
    {Using MEM$^{++}$ as the entropy model, we develop the image compression method MLIC$^{++}$}. Extensive experimental results demonstrate that MLIC$^{++}$ achieves state-of-the-art performance, 
    reducing BD-rate by $13.39\%$ on the Kodak dataset compared to VTM-17.0 in 
    Peak Signal-to-Noise Ratio (PSNR). Furthermore, MLIC$^{++}$ exhibits 
    {linear computational complexity and memory consumption with resolution}, 
    making it highly suitable for high-resolution image coding.
    Code and pre-trained models are available at \texttt{\url{https://github.com/JiangWeibeta/MLIC}}.
    Training dataset is available at \texttt{\url{https://huggingface.co/datasets/Whiteboat/MLIC-Train-100K}}.
  \end{abstract}
  
  \section{Introduction}
Due to the rise of social media, tens of millions of images
are generated and transmitted on the web every second.
\begin{figure*}
  \centering
  \includegraphics[width=0.43\linewidth]
  {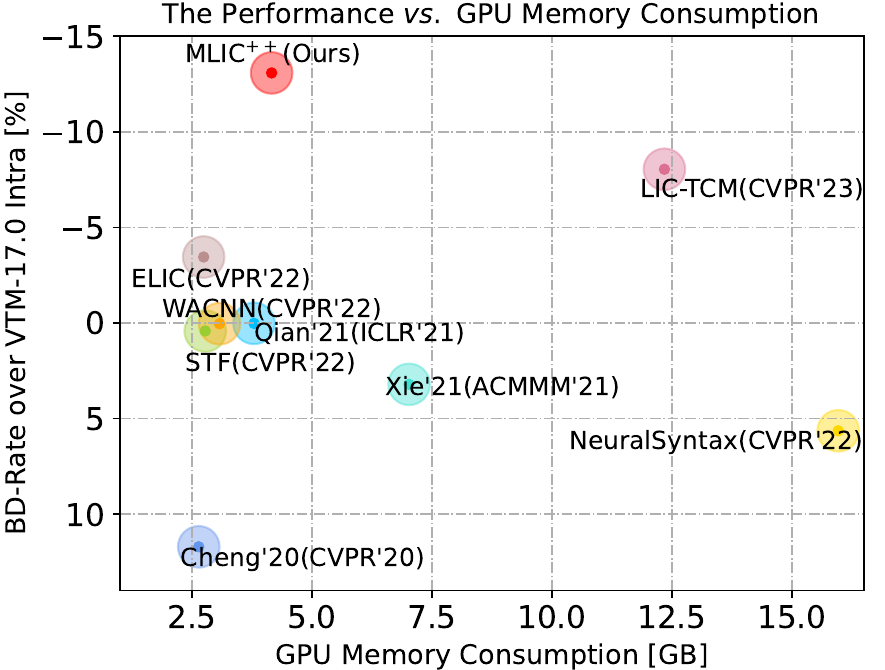}
  \includegraphics[width=0.40\linewidth]
  {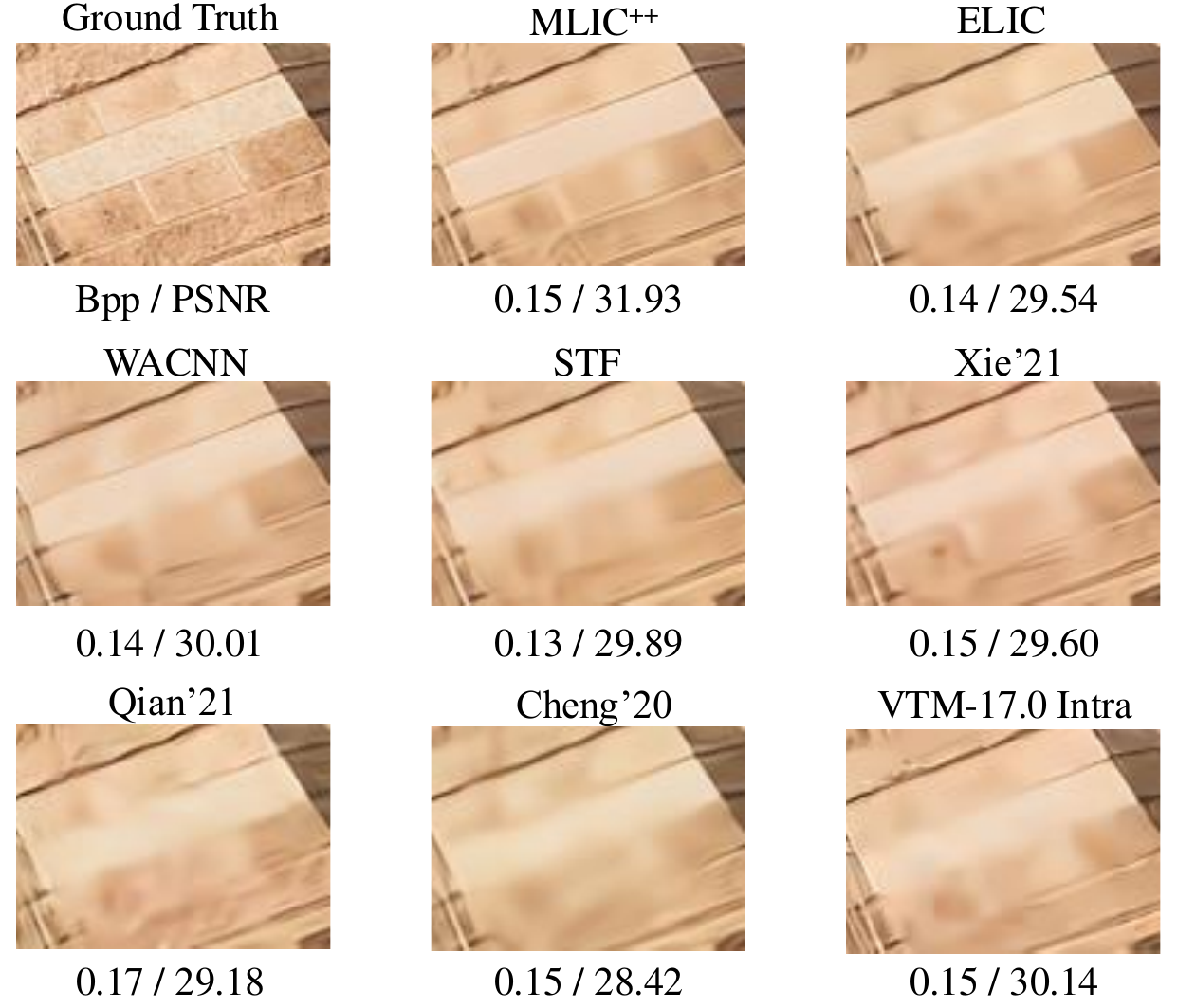}
  \caption{Left: BD-Rate-GPU Memory Consumption during inference on CLIC Professional Valid~\cite{CLIC2020} with 2K resolution.
  Our MLIC$^{++}$ achieves a better trade-off between performance and GPU memory consumption.
  {Right: Reconstruction comparison on ``vita-vilcina-3055'' from CLIC Professional Valid~\cite{CLIC2020} dataset. 
  The reconstruction of MLIC$^{++}$ has the best visual quality.}
  }
  \label{fig:ctx_compare_new}
\end{figure*}
In order to conserve bandwidth, service providers are compelled to seek more efficient and effective image compression methods.
Although traditional coding methods like JPEG~\cite{pennebaker1992jpeg},
JPEG2000~\cite{maryline1999jpeg2000}, AVC~\cite{wiegand2003avc}, HEVC~\cite{sullivan2012overview}, and VVC~\cite{bross2021vvc}
have achieved commendable performance, their design relies on manual design for each module.
This lack of joint optimization hampers their ability to fully exploit the potential for further advancements in image compression.\par
Recently, various learned image compression models
~\cite{wu2021learned,guo2021causal,balle2016end,theis2017lossy,balle2018variational,minnen2018joint,hu2020coarse,ma2020iwave}
have emerged, showcasing impressive performance gains.
Notably, certain learned image compression models~\cite{minnen2020channel,cheng2020learned,zou2022the,qian2020learning,xie2021enhanced,he2022elic,jiang2022mlic,chen2021nic,gao2021neural,chen2022two,koyuncu2022contextformer,duan2023lossy,liu2023learned,jiang2023slic,duan2023qarv} are already comparable to
the advanced traditional method VVC.
These models predominantly rely on auto-encoders or variational auto-encoders~\cite{kingma2014vae},
and follow a process that involves transform, quantization, entropy coding, and inverse transform.
Entropy coding plays an important role
in boosting model performance. An entropy model is utilized to estimate
the entropy of the latent representation.
A powerful and accurate entropy model usually leads to fewer bits.
Expanding contexts of entropy model in learned codecs plays the same role as expanding prediction modes in traditional codecs.\par
State-of-the-art learned image compression models~\cite{cheng2020learned,minnen2020channel,xie2021enhanced,zhu2022transformerbased,he2022elic,jiang2022mlic,jiang2023slic,lin2023multistage}
commonly enhance the entropy model by incorporating a hyper-prior module~\cite{balle2018variational} or a context module~\cite{minnen2018joint}.
These additional modules enable the estimation of conditional entropy and the utilization of conditional probabilities for entropy coding.
Context modules usually model probabilities and correlations in different dimensions,
including local spatial context module,
global spatial context module, and channel-wise context module.
However, the current global context modules rely on computationally intensive \textit{quadratic} complexity computations,
which consume \textit{huge} GPU memories and have slower encoding and decoding speed,
imposing limitations on the potential for high-resolution image coding.
Furthermore,
effectively capturing local, global, and channel-wise contexts with \textit{acceptable} even \textit{linear} complexity
within a single entropy model remains a challenge.
To overcome the aforementioned limitations,
we propose a novel \textit{linear complexity}  multi-reference entropy model.
This entropy model effectively captures local spatial, global spatial, and channel-wise contexts with \textit{linear} complexity
and can be employed for efficient high-resolution image coding, which is denoted as MEM$^{++}$
to differ from our prior work~\cite{jiang2022mlic} presented at ACMMM 2023.
Based on MEM$^{++}$, we introduce MLIC$^{++}$,
which achieve state-of-the-art performance.\par
In our approach, the latent representations is divided into multiple slices along the channel dimension.
When compressing a particular slice, the previously compressed slices serve as its channel-wise contexts,
which are extracted by a channel-wise context module.
Local and global context modeling are conducted separately for each slice.
The utilization of an auto-regressive local context module~\cite{minnen2018joint,van2016conditional} leads to serial decoding,
while a checkerboard context module~\cite{he2021checkerboard} facilitates
two-pass parallel decoding by dividing the latent representations into anchor and non-anchor parts.
However, it is worth noting that the checkerboard context module may result in a performance degradation of up to $4\%$~\cite{qian2022entroformer}.
To address this issue, we propose a novel overlapped checkerboard window attention with
\textit{linear} complexity, which
further enhances the local context capturing while retaining two-pass decoding.
Some previous methods focus on global context modeling~\cite{qian2020learning,guo2021causal},
which typically involve \textit{quadratic} complexity or the utilization of additional bits to store global similarity as side information.
Additionally, these global context modules often collaborate with serial local spatial context modules, further increasing the computational complexity.
Assuming comparable spatial correlations across different slices,
we initially calculate the attention map of the previously decoded $i-1$-th slice in a \textit{vanilla} approach.
This attention map is utilized to predict the global correlations within the $i$-th slice.
However, in the \textit{vanilla} attention mechanism, the softmax operation dictates the order of computation among tensors,
where the attention map, product of queries and keys are required to be computed first.
To circumvent the \textit{quadratic} complexity,
we employ the decomposition of a softmax operation
into two independent softmax operations such that the product of keys and values can be computed first,
resulting in \textit{linear} complexity.
The proposed linear complexity attention-based global context modules capture 
global correlations in an \textit{implicit} way, as there is no need to directly compute
the attention map during training and testing.
In addition, we also propose the \textit{linear} complexity inter-slice global spatial
context modules to explore the global correlations in all preceding slices.
The \textit{linear complexity} context allows our model to have a \textit{linear} relationship
between consumed GPU memory and resolution \textit{without} additional bits, while
having the performance gain that comes from the global contexts.
Ultimately, we integrate the channel, local, intra-slice, inter-slice global contexts,
along with the side information for multi-reference entropy modeling.
Our contributions are summarized as follows:
\begin{itemize}
\item To address the degradation associated with checkerboard context modeling
while preserving the benefits of two-pass decoding, we devise a novel
approach called shifted window-based checkerboard attention with \textit{linear} complexity.
This technique enables us to capture local spatial contexts more effectively.
{\item To address the degradation associated with checkerboard context modeling
while preserving the benefits of two-pass decoding, we devise a novel
approach called shifted window-based checkerboard attention with \textit{linear} complexity.
This technique enables us to capture local spatial contexts more effectively.
\item For enhanced context modeling efficiency, 
we decompose the softmax operations in vanilla attention into two independent
softmax operations. This reduces computational complexity to linear without compromising performance. 
Additionally, we explore the global correlations
\item 
We design \textit{linear complexity}  We design \textit{linear complexity}  multi-reference entropy model MEM$^{++}$ which
captures local spatial, global spatial and channel contexts, as well as
hyper-prior side information. 
Using MEM$^{++}$ as the entropy model,
we develop MLIC$^{++}$, which achieves state-of-the-art performance with linear complexity.
Our proposed MLIC$^{++}$ achieved a better trade-off between complexity and performance as depicted in Fig.~\ref{fig:ctx_compare_new}.}
\end{itemize}\par
In comparison to our previous work presented at ACMMM 2023~\cite{jiang2022mlic}, MLIC$^{++}$ introduces several significant advancements.
The primary distinction lies in the utilization of the proposed \textit{linear}
complexity global spatial context modules without sacrificing performance,
as opposed to the \textit{quadratic} complexity observed in our prior work.
This achievement is primarily attributed to the division
of the softmax operation, which eliminates the need for
a specific order of tensor computation. Our proposed modules
incorporate advanced techniques such as learnable position embedding and
depth-wise residual bottlenecks~\cite{jiang2023slic}.
Furthermore, MLIC$^{++}$ captures inter-slice global correlations from all previous slices,
in contrast to our previous work~\cite{jiang2022mlic},
which only considers correlations within the previous one slice.
MLIC$^{++}$
exhibits several advantages,
including reduced GPU memory consumption and faster encoding and decoding speed.
The \textit{linear} complexity makes our MLIC$^{++}$ highly suitable for high-resolution image coding.
{
  To demonstrate the superiority of MLIC$^{++}$, we conduct comprehensive performance and complexity comparisons against existing methods~\cite{cheng2020learned,minnen2020channel,qian2020learning,xie2021enhanced,qian2022entroformer,zhu2022transformerbased,wang2022neural,zhu2022unified,zou2022the,he2022elic,liu2023learned} across multiple datasets~\cite{CLIC2020,kodak,tecnick2014TESTIMAGES} and resolutions~\cite{liu2020comprehensive}, 
  extending beyond the preliminary experiments presented in the workshop version~\cite{jiang2023mlicpp}.}
These advancements in MLIC$^{++}$ contribute to the field of image compression
by offering improved efficiency and performance,
while maintaining high-quality compression capabilities.

\section{Related Works}
\label{sec:related}
\subsection{Learned Image Compression}
Learned image compression~\cite{balle2016end,theis2017lossy} aims to optimize the trade-off
between distortion $\mathcal{D}$ and entropy, where entropy is typically measured in terms of {bit-rate~\cite{theis2017lossy,yang2023introduction,balle2016end,balle2018variational,balle2020nonlinear}} $\mathcal{R}$.
Large bit-rate usually leads to lower distortion .
Lagrange multiplier $\lambda$ is employed to adjust the weight of
distortion to control the target bit-rate. The optimization target is
\begin{equation}\label{eq:rd}
    \mathcal{L} = \mathcal{R} + \lambda \mathcal{D}.
\end{equation}\par
The fundamental learned image compression framework~\cite{balle2016end,theis2017lossy}
is based on the an auto-encoder with a rate penalty. This framework comprises an
analysis transform $g_a$, a quantization function $Q$, a synthesis transform $g_s$ and
an entropy model to estimate rates. The process can be formulated as:
\begin{equation}
    \boldsymbol{y} = g_a(\boldsymbol{x};\theta), \hat {\boldsymbol{y}} = Q(\boldsymbol{y}), \hat {\boldsymbol{x}} = g_s(\hat {\boldsymbol{y}};\phi),
\end{equation}
where $\boldsymbol x$ represents the input image, $g_a$ transform the $\boldsymbol x$ to
compact latent representation $\boldsymbol y$. $\boldsymbol y$
is quantized to $\hat {\boldsymbol{y}}$ for entropy coding.
$\hat {\boldsymbol{x}}$ represents the decompressed image.
$\theta$ and $\phi$ are parameters of $g_a$ and $g_s$.
Since quantization is non-differentiable, it can be addressed during training by either adding uniform noise
$\mathcal{U}(-0.5, 0.5)$~\cite{balle2016end,balle2018variational} or using the straight-through estimator (STE)~\cite{theis2017lossy}.
In particular, when uniform noise is added, the rate-distortion optimization
target in Equation~\ref{eq:rd} is equivalent to evidence lower bound (ELBO) optimization in variational
auto-encoders~\cite{kingma2014vae}.
To enhance non-linearity, Generalized Divisive Normalization (GDN)~\cite{balle2015gdn} layers or its variants~\cite{qian2020learning} are employed.
Additionally, self-attention~\cite{vas2017attention,liu2021swin,lu2021tic, zou2022the,liu2023learned,guo2021causal},
ensemble techniques~\cite{wang2020ensemble}, and block partition~\cite{wu2021learned} are utilized in transform modules for
more compact latent representations.
In the basic model, a factorized or
a non-adaptive density entropy model is adopted.\par
In subsequent works, a hyper-prior module~\cite{balle2018variational} is introduced to extract side information $\hat {\boldsymbol{z}}$ from $\boldsymbol y$.
The hyper-prior model estimates the distribution of $\hat {\boldsymbol{y}}$ from
$\hat {\boldsymbol{z}}$. A univariate Gaussian distribution is commonly employed for the hyper-prior.
Some works extend it to a mean-scale Gaussian distribution~\cite{minnen2018joint},
asymmetric Gaussian distribution~\cite{cui2021asym},
Gaussian mixture model~\cite{cheng2020learned,liu2020unified}, and Gaussian-Laplacian-Logistic mixture model~\cite{fu2023learned}
for more flexible distribution modeling.
\subsection{Context-based Entropy Modeling}
\label{sec:related:context}
Numerous approaches~\cite{minnen2018joint,minnen2020channel,qian2020learning} have been proposed to improve the accuracy of context modeling in learned image compression.
These methods encompass various types of context modules, including local spatial, global spatial, and channel-wise context modules.\par
Local spatial context modules aim to capture correlations between adjacent symbols.
For instance, Minnen~\textit{et al.}~\cite{minnen2018joint} utilize a pixel-cnn-like~\cite{van2016conditional}
masked convolutional layer to capture local correlations between
$\hat {\boldsymbol{y}}_i$ and symbols $\hat {\boldsymbol{y}}_{<i}$,
resulting in serial decoding.
He \textit{et al.}~\cite{he2021checkerboard}
divide latent representation $\hat {\boldsymbol{y}}$ into
anchor part $\hat {\boldsymbol{y}}_a$ and non-anchor part $\hat {\boldsymbol{y}}_{na}$,
employing a checkerboard convolution to extract contexts of
$\hat {\boldsymbol{y}}_{na}$ from $\hat {\boldsymbol{y}}_a$, thereby
achieving two-pass parallel decoding.
\par
On the other hand, some approaches focus on modeling correlations between distant symbols.
In~\cite{qian2020learning}, neighboring left and top symbols serve as bases for computing the similarity
between the target symbol and its previous symbols.
Guo~\textit{et al.}~\cite{guo2021causal} employ
the $L2$ distances of symbols to predict global casual dependencies among symbols.
In~\cite{kim2022joint}, the side information
is divided into global side information and local side information, introducing additional bits.
However, these global context modules are typically combined with serial auto-regressive context modules,
which further increase decoding latency.
Moreover, existing global context modules~\cite{qian2020learning,guo2021causal,jiang2022mlic} often exhibit \textit{quadratic} complexity,
making them challenging to apply in high-resolution image coding.
Alternatively, they rely on extra side information~\cite{kim2022joint}, which increases the bit-rate.
\par
Minnen \textit{et al.}~\cite{minnen2020channel} model contexts between channels.
$\hat {\boldsymbol{y}}$ is evenly divided to slices. The current slice
$\hat {\boldsymbol{y}}^i$ is conditioned on previously decoded slices
$\hat {\boldsymbol{y}}^{<i}$. To address the uneven distribution of information among slices,
an unevenly grouped channel-wise context module is introduced in~\cite{he2022elic}.\par
While some local and channel-wise context modules~\cite{ma2021cross,he2022elic,jiang2022mlic}
have demonstrated impressive performance,
effectively capturing local, global, and channel-wise contexts with \textit{acceptable} even \textit{linear} complexity
within a single entropy model remains a challenge.
Addressing these correlations has the potential to further enhance the performance of image compression models.
\begin{figure*}[t]
\centering
\includegraphics[width=0.47\linewidth]
{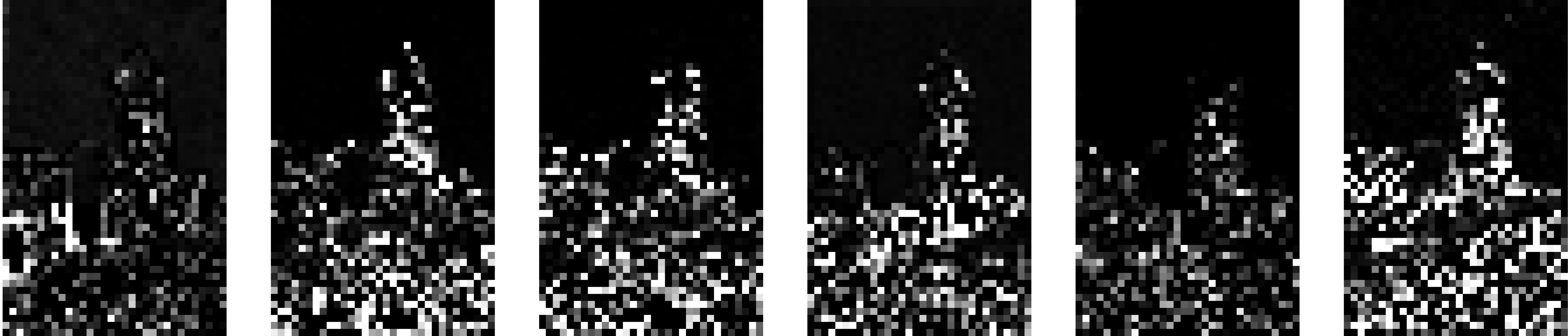}
\includegraphics[width=0.47\linewidth]
{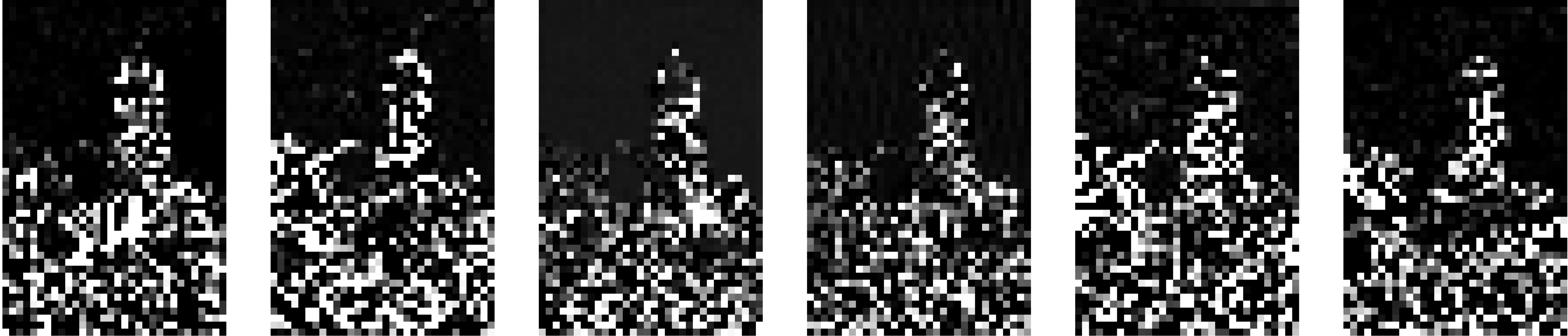}
\caption{Visualization of channels of latent representation of Kodim19
extracted by Cheng'20~\cite{cheng2020learned} (optimized for MSE, $\lambda=0.0483$) to illustrate channel-wise redundancy.
{These channels are nearest-neighbor upsampled for visualization}.}
\label{fig:cosine}
\end{figure*}
\begin{figure}[t]
  \centering
  \includegraphics[width=0.62\linewidth]
  {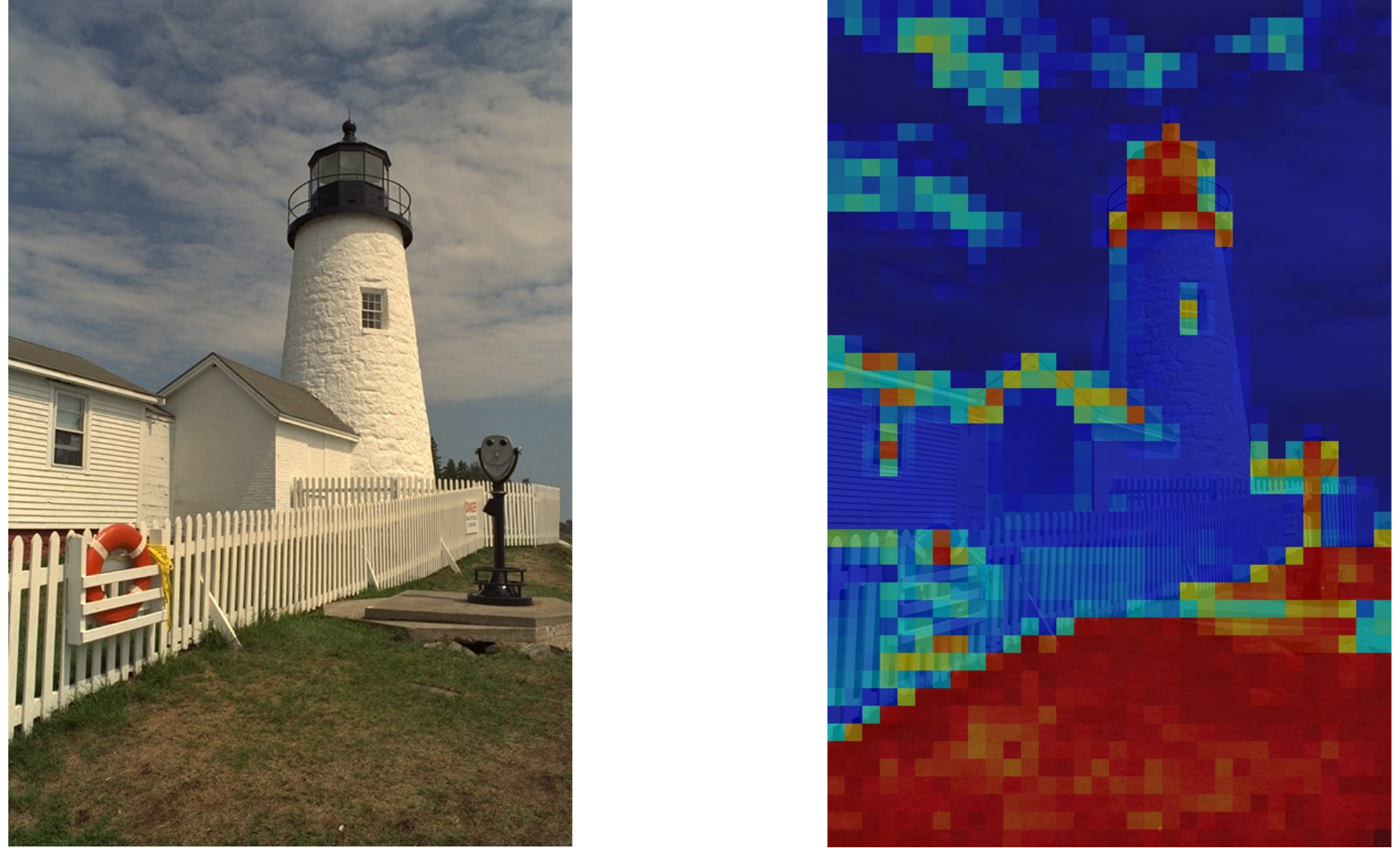}
  \caption{Heatmap of spatial cosine similarity of latent representation
  of Kodim19 extracted by Cheng'20~\cite{cheng2020learned} (optimized for MSE, $\lambda=0.0483$) to visualize global spatial and local spatial redundancy.
  {The heatmap is nearest-neighbor upsampled for visualization}.}
  \label{heatmap}
  \end{figure}
\begin{table*}[t]
\small
\centering
\begin{tabular}{c|c|c|c}
\toprule
Notations                                         & Explanation  & Notations                                         & Explanation  \\ \midrule
$\boldsymbol{x},\hat{\boldsymbol{x}}$                                  & Input and decoded image    & $\boldsymbol{y},\hat{\boldsymbol{y}}$        & Non-quantized and quantized latent representation                  \\\midrule
$\hat{\boldsymbol{y}}^i $                         & The $i$-th slice of $\hat{\boldsymbol{y}}$ & $\boldsymbol{z},\hat{\boldsymbol{z}}$  &     Non-quantized and quantized side information              \\\midrule
$\hat{\boldsymbol{y}}_{ac},\hat{\boldsymbol{y}}_{na}$                       & Anchor and non-anchor part of $\hat{\boldsymbol{y}}$   & $g_{ep}$     &  Entropy parameter module  \\\midrule
$\mu,\sigma$                                      & Mean and scale of $\hat{\boldsymbol{y}}$    & $g_a, g_s$     &  Analysis and synthesis transform       \\\midrule
$h_a,h_s$                                         & Hyper analysis and synthesis &  $g_{ch}$     & Channel-wise context module             \\\midrule
$g_{lc,ckbd}$                                     & Vanilla checkerboard context module       &   $g_{lc,attn}$                                     & Shifted Window-based Checkerboard Attention           \\\midrule
$g_{gc,intra}$                                    & Intra-slice global spatial context module    &   $g_{gc,inter}$                                    & Inter-slice global spatial context module     \\\midrule
$ {\boldsymbol{\Phi}}_{h},\boldsymbol{\Phi}_{ch},{\boldsymbol{\Phi}}_{lc}$               & Hyper-prior, channel-wise and local spatial context   &  $ {\boldsymbol{\Phi}}_{gc,intra}$                 & Intra-slice global spatial context   \\\midrule
$ {\boldsymbol{\Phi}}_{gc,inter}$                 & Inter-slice global spatial context & MEM $^{++}$  & Linear complexity multi-reference entropy model  \\\midrule 
$M, N, S$                                         & Channel number of $\boldsymbol{y}$, $\boldsymbol{z}$, and $\hat{\boldsymbol{y}}^i$      &     $K$                                               & Kernel size of local spatial context module  \\\midrule

\end{tabular}
\label{tab:notation}
\caption{Explanations of notations.}
\end{table*}
\begin{table}[t]
\footnotesize
\centering
\setlength{\tabcolsep}{1mm}{
\begin{tabular}{lccccccccc}
  \toprule
   $N$    & $M$    & $S$     & $K$    & Entropy Model       \\ \midrule
     $192$  & $320$  &$32$  & $5$   &   MEM$^{++}$($g_{lc,attn}, g_{ch}, g_{gc,intra}, g_{gc,inter}$)   \\\midrule
\end{tabular}}
\caption{Settings of MLIC$^{++}$ and MEM$^{++}$.}
\label{tab:settings}
\end{table}
\section{Method}
\label{sec:method}
\subsection{Motivation}
According to information theory, the conditional entropy is bounded by the entropy:
\begin{equation}
    H(\hat{\boldsymbol{y}}) \geq H(\hat{\boldsymbol{y}}|\boldsymbol{ctx}),
\end{equation}
where $H$ denotes Shannon entropy, $\boldsymbol{ctx}$ is the context of $\hat {\boldsymbol{y}}$.
Exploiting correlations in $\hat {\boldsymbol{y}}$ results in bit savings.\par
In Fig.~\ref{fig:cosine} and Fig.~\ref{heatmap},
channel-wise correlations and spatial correlations in
latent representation of Kodim19 extracted by Cheng'20~\cite{cheng2020learned} are illustrated.\par
Fig.~\ref{fig:cosine} visualizes the features of several channels, revealing their significant similarity.
However, capturing such correlations poses a challenge for spatial context modules,
as they employ the same \textit{mask} for all channels during context extraction.
Consequently, certain correlations may not be fully captured.\par
In Fig.~\ref{heatmap}, cosine similarity between
each symbol and the symbol in the bottom right corner are visualized.
Symbols with the same color exhibit a high degree of correlation.
Neighbouring symbols have a very high degree of similarity.
This observation emphasizes the necessity of a local context module.
Furthermore, a global context module is required to capture the correlations
between symbols in the bottom-left corner and those in the bottom-right corner,
where the grass features share similarities. Additionally, 
the complexity of global context capturing should be carefully considered and minimized for
high-resolution image coding.
The latent representation contains redundancy,
indicating the potential for bit savings by modeling such correlations.\par
However, existing entropy models fail to capture correlations in local spatial, global spatial, and channel domains.
Spatial context modules have limited interactions between channels, while channel-wise context modules
lack interaction within the current slice.
Moreover, extending these models to high-resolution image coding with \textit{acceptable} even \textit{linear} complexity is \textit{non-trivial}.
These challenges, along with the potential to enhance rate-distortion performance,
motivate us to design a \textit{linear} complexity  multi-reference entropy model.
Our proposed \textit{linear} complexity multi-reference entropy model effectively
captures correlations in local spatial, global spatial, and channel domains,
while maintaining a modest complexity for \textit{high-resolution} image coding.
Further details on our model are presented in the subsequent sections.
\begin{figure*}[t]
  \centering
  \includegraphics[width=\linewidth]
  {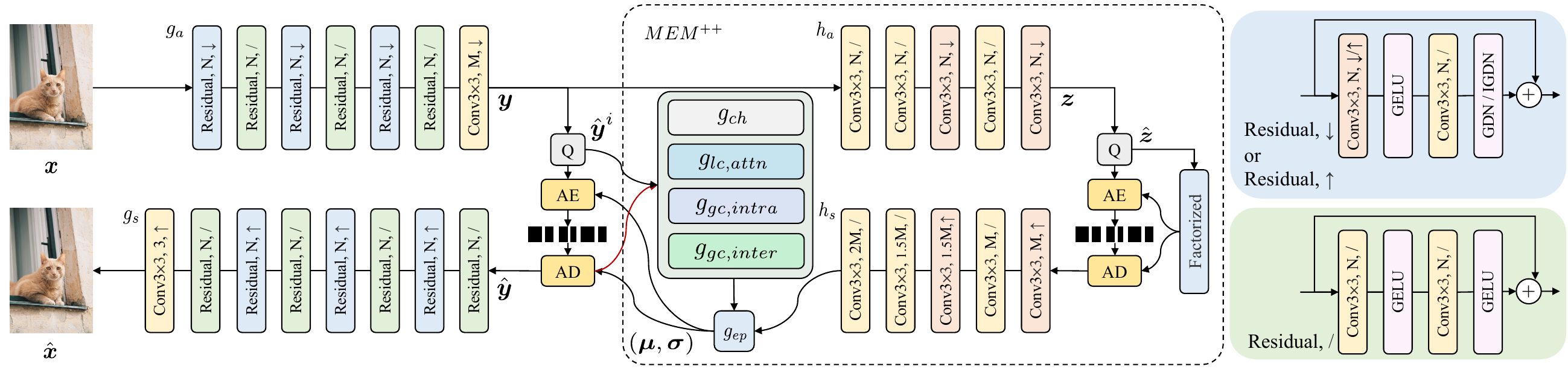}
  \caption{The overall architecture of MLIC$^{++}$.
  $\downarrow$ means down-sampling.
  $\uparrow$ means up-sampling.
  / means stride equals $1$.
  Red line is the dataflow during decoding.
  ${\boldsymbol{x}}$ is the input image and $\hat{\boldsymbol x}$ is the reconstructed image. $Q$ is quantization. $AE$ is arithmetic encoding. $AD$ is arithmetic decoding.
  $\boldsymbol{y}$ is the latent representation and $\hat{\boldsymbol{y}}$ is
  the quantized latent representation. $\hat{\boldsymbol y}^i$ is the $i$-th slice of $\hat{\boldsymbol{y}}$.}
  \label{fig:arch}
  \end{figure*}
  \subsection{Overall Architecture}
  \label{sec:method:overview}
  \subsubsection{MLIC$^{++}$}
The overall architecture of proposed model is illustrated in Fig.~\ref{fig:arch}.
This model is named MLIC$^{++}$ to distinguish it from MLIC, and MLIC$^+$, which are introduced in our conference version~\cite{jiang2022mlic}.
The architecture of MLIC$^{++}$, as depicted in Fig.~\ref{fig:arch}, incorporates the analysis transform $g_a$, synthesis transform $g_s$,
hyper analysis $h_a$, and hyper synthesis $h_s$, which are simplified versions of Cheng'20~\cite{cheng2020learned}.
To reduce complexity, attention modules are removed.
\begin{figure}[t]
  \centering
  \includegraphics[width=\linewidth]
  {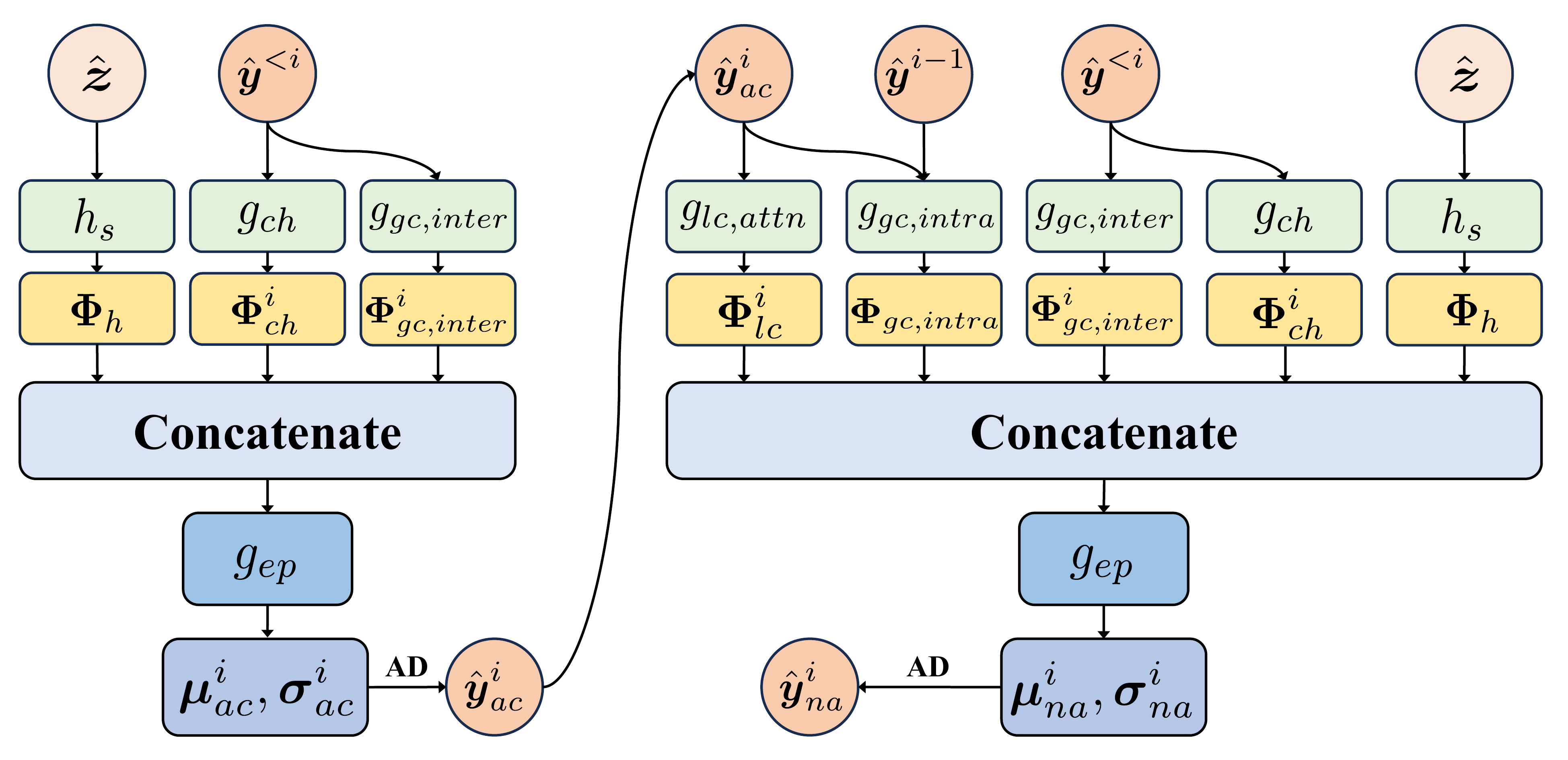}
  \caption{Linear Multi-Reference Entropy Model MEM$^{++}$.
  The figure illustrates the process of decoding a slice $\hat {\boldsymbol{y}}^i$.}
  \label{fig:mem}
\end{figure}
The hyper-parameters and settings of MLIC$^{++}$ are presented in Table~\ref{tab:settings}.
Same to Minnen \textit{et al.}~\cite{minnen2020channel},
we adopt \textit{mixed quantization}, which involves adding uniform noise for entropy estimation
and utilizing STE~\cite{theis2017lossy} to ensure differentiability in the quantization process.
Gaussian mean-scale distribution is adopted for entropy estimation.
For latent representation $\boldsymbol{y}$, the quantization and {estimated rate~\cite{theis2017lossy,yang2023introduction,balle2016end,balle2018variational,balle2020nonlinear}} is formulated as:
\begin{equation}
    \hat{\boldsymbol{y}} = \mathrm{STE}(\boldsymbol{y} - \boldsymbol{\mu}) + \boldsymbol{\mu},
  \end{equation} 
  \begin{equation} 
  \mathcal{R}_{\hat{\boldsymbol{y}}}=\mathbb{E}\left[ -\log \int_{-0.5}^{0.5} p({\boldsymbol{y}+\boldsymbol{u}})d\boldsymbol{u}  \right],
\end{equation}
where $\boldsymbol{\mu}$ is the estimated mean of latent representation $\boldsymbol{y}$, $\boldsymbol{u} \sim \mathcal{U}(-0.5,0.5)$, $\mathcal{R}_{\hat{\boldsymbol{y}}}$ is the estimated rate of
$\hat{\boldsymbol{y}}$.
\subsubsection{MEM$^{++}$}
The proposed linear complexity multi-reference entropy model effectively captures channel-wise, local spatial, and global spatial correlations with linear complexity.
The linear complexity entropy model is denoted as MEM$^{++}$ to distinguish it from MEM, and MEM$^{+}$,
which are proposed in our conference version~\cite{jiang2022mlic}.
To capture multi-correlations, the proposed MEM$^{++}$ consists of four components:
channel-wise context module $g_{ch}$, local spatial context module $g_{lc}$,
intra-slice global spatial context module $g_{gc,intra}$, and inter-slice global spatial context module $g_{gc,inter}$.
In the channel-wise context module, the latent representation $\hat{\boldsymbol{y}}$
is divided into slices $\{\hat {\boldsymbol{y}}^0, \hat {\boldsymbol{y}}^1, \cdots\, \hat {\boldsymbol{y}}^L\}$~\cite{minnen2020channel} along the channel dimension,
$L$ is the number of slices.
For the $i$-th slice $\hat {\boldsymbol{y}}^i$, the channel-wise context module captures
the channel-wise context $\boldsymbol{\Phi}_{ch}^i$ from slices $\hat {\boldsymbol{y}}^{<i}$.
To capture local spatial correlations, checkerboard pattern~\cite{he2021checkerboard} is employed, where
the latent representation $\hat{\boldsymbol{y}}^i$ is divided into
anchor part $\hat{\boldsymbol{y}}_{ac}^i$ and non-anchor part $\hat{\boldsymbol{y}}_{na}^i$.
$\hat{\boldsymbol{y}}_{ac}^i$ is local-context-free.
Local spatial context ${\boldsymbol{\Phi}}_{lc}^i$ of
$\hat{\boldsymbol{y}}_{na}^i$ is captured from $\hat{\boldsymbol{y}}_{ac}^i$.
We propose Overlapped Window-based Checkerboard Attention $g_{lc, attn}$ for better non-linearity and adaptability
to capture local spatial contexts.
The global contexts $\boldsymbol{\Phi}_{gc}$ of $i$-th slice are extracted from two dimensions:
intra-slice contexts $\boldsymbol{\Phi}_{gc, intra}^i$, and inter-slice contexts $\boldsymbol{\Phi}_{gc, inter}^i$.
We propose Intra-Slice Global Context Module $g_{gc, intra}^i$ and
Inter-Slice Global Context Module $g_{gc, inter}$ to capture such correlations.
Since different slices share the similar global similarity~\cite{jiang2022mlic,guo2021causal},
the global similarity of $\hat{\boldsymbol{y}}^{i-1}$ is employed
to predict the global correlations between $\hat{\boldsymbol{y}}^{i}_{ac}$ and $\hat{\boldsymbol{y}}^{i}_{na}$.
The inter-slice global context $\Phi^{i}_{gc,inter}$ are extracted
from slices $\hat{\boldsymbol{y}}^{< i}$ via the
global similarity of slices $\hat{\boldsymbol{y}}^{< i}$.
We introduce these modules in the following sections.
The structure of MEM$^{++}$ is illustrated in Table~\ref{tab:settings}.
We use Equation~\ref{eq:rd} as our loss function
and the {estimated rate~\cite{theis2017lossy,yang2023introduction,balle2018variational,balle2016end,balle2020nonlinear}} can be formulated as:
$\mathcal{R} =  \mathcal{R}_{\hat{\boldsymbol{z}}} + \sum^L_{i=0}\left(\mathcal{R}_{\hat{\boldsymbol{y}}^i_{ac}} + \mathcal{R}_{\hat{\boldsymbol{y}}^i_{na}}\right)$, where
\begin{equation}
  \mathcal{R}_{\hat{\boldsymbol{z}}}=\mathbb{E}\left[ -\log \int_{-0.5}^{0.5} p({\boldsymbol{z}+\boldsymbol{u}})d\boldsymbol{u}  \right],
\end{equation}
\begin{equation}
  \begin{aligned}
    \mathcal{R}_{\hat{\boldsymbol{y}}^i_{ac}} &= \mathbb{E}\biggl[-\log \int_{-0.5}^{0.5} p\biggl({\boldsymbol{y}}^i_{ac} + \boldsymbol{u}|\Phi_{h}, \\ & \Phi_{ch}^i, \Phi_{gc,inter}^i \biggr)d\boldsymbol{u} \biggr],
  \end{aligned}
  \end{equation}
  \begin{equation}
  \begin{aligned}
  \mathcal{R}_{\hat{\boldsymbol{y}}^i_{na}} &= \mathbb{E}\biggl[-\log \int_{-0.5}^{0.5} p\biggl({\boldsymbol{y}}^i_{na}+\boldsymbol{u}|\Phi_{h}, \\& \Phi_{lc}^i, \Phi_{ch}^i, \Phi_{gc,intra}^i, \Phi_{gc,inter}^i \biggr) d\boldsymbol{u}\biggr],
  \end{aligned}
\end{equation}
  $\mathcal{R}_{\hat{\boldsymbol{z}}}$ is the rate of side information,
  $\mathcal{R}_{\hat{\boldsymbol{y}}^i_{ac}}$ denotes the rate of the anchor
  part of $i$-th slice,   $\mathcal{R}_{\hat{\boldsymbol{y}}^i_{na}}$ denotes the rate of the non-anchor
  part of $i$-th slice,
$\boldsymbol{\Phi}_{h}$ is the hyper-priors extracted by hyper analysis $h_a$ and hyper synthesis $h_s$.\par
\subsection{Channel-wise Context Module}\label{sec:method:channel}
To extract channel-wise contexts, the latent representation $\hat {\boldsymbol{y}}$ is first evenly divided
into multiple slices
$\{\hat {\boldsymbol{y}}^0, \hat {\boldsymbol{y}}^1, \cdots, \hat {\boldsymbol{y}}^L\}$ along the channel dimension.
Slice $\hat {\boldsymbol{y}}^i$ is conditioned on
slices $\hat {\boldsymbol{y}}^{<i}$.
A channel context module $g_{ch}$ is employed to squeeze and
extract context information from $\hat {\boldsymbol{y}}^{<i}$ when
encoding and decoding $\hat {\boldsymbol{y}}^i$. $g_{ch}$ consists of
three $3\times 3$ convolutional layers. The channel context becomes
${\boldsymbol{{\boldsymbol{\Phi}}}}_{ch}^i = g_{ch}(\hat {\boldsymbol{y}}^{<i})$.
The channel-wise context module $g_{ch}$ is able to
refer to symbols in the same and close position in the previous
slices and helps select the most relative
channels and extract information beneficial for accurate probability estimation.
The channel number of each slice $S$ is a hyper-parameter.
Following Minnen \textit{et al}~\cite{minnen2020channel},
we set $S$ to $32$ and $L$ to $10$ in our model.
Following existing methods~\cite{minnen2020channel,zou2022the},
latent residual prediction (LRP) modules~\cite{minnen2020channel} are adopted
to predict quantization error according to
decoded slices and hyper-priors $\boldsymbol{\Phi}_{h}$.
Since the channel number of latent representation is freezed during training and
inference and the number of slices is quite small, the encoding speed
and decoding speed is still fast enough in spite of serial process among slices.
\begin{figure}[t]
  \centering
  \includegraphics[width=\linewidth]
  {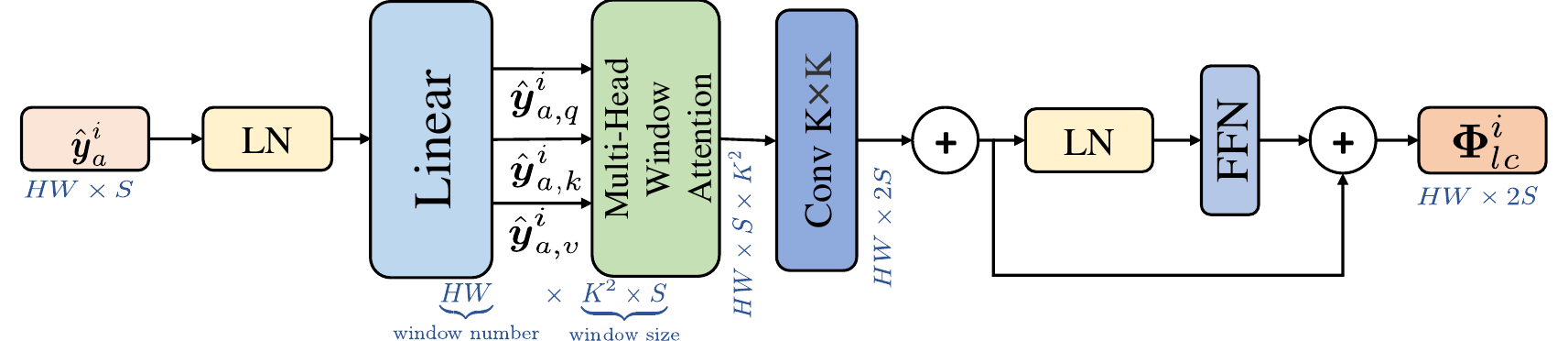}
  \caption{{Checkerboard Attention Context Module $g_{lc, attn}$.}}
  \label{fig:ckbd_attn_arch}
\end{figure}
\begin{figure}[t]
  \centering
  \includegraphics[width=\linewidth]
  {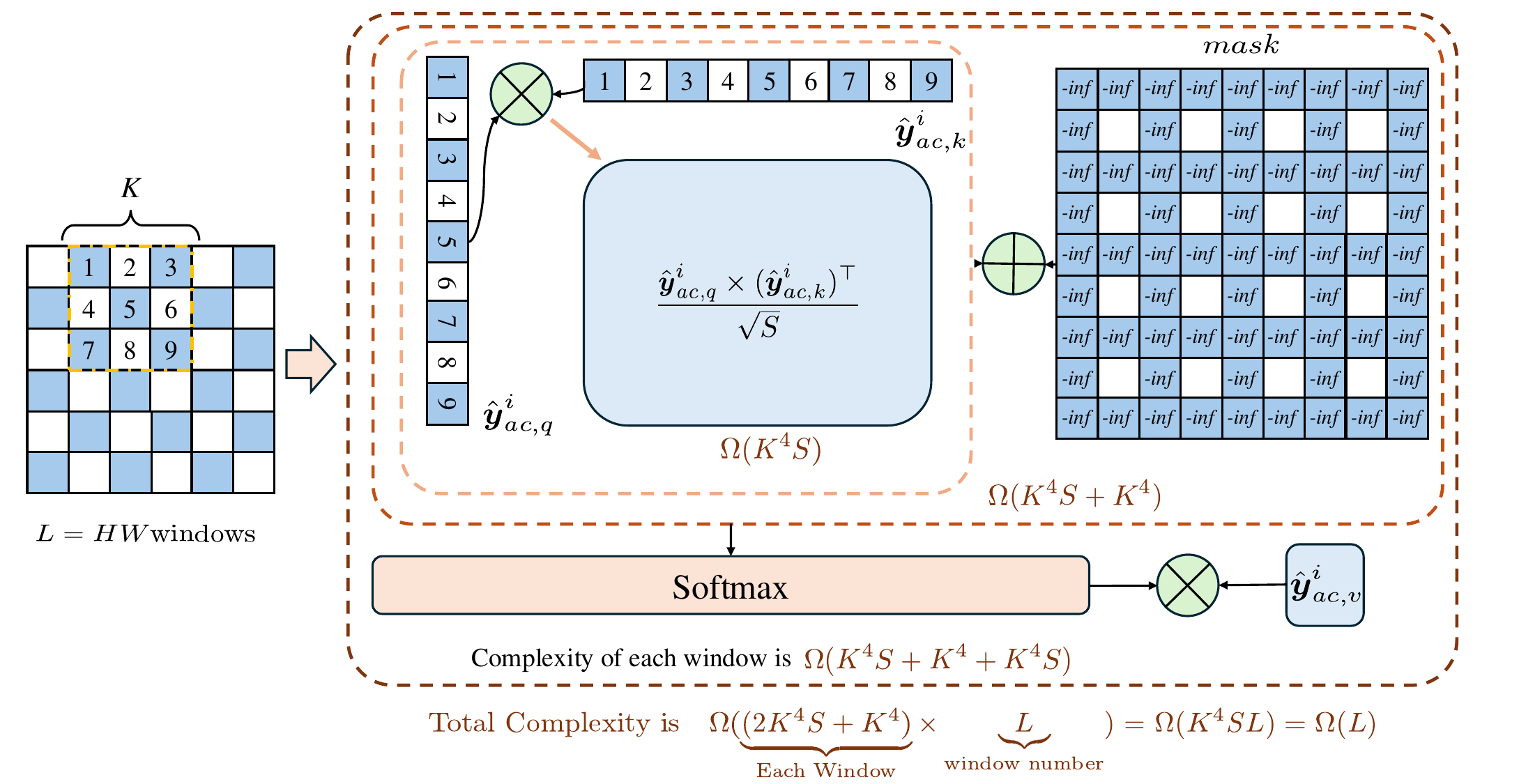}
  \caption{{Visualization of the process and complexity of Shifted Window-based Checkerboard Attention $g_{lc, attn}$.
  Blue squares are non-anchor part $\hat {\boldsymbol{y}}_{na}$,
  white squares are anchor part $\hat {\boldsymbol{y}}_{ac}$.}}
  \label{ckbd_attn}
  \end{figure}
\begin{figure}[t]
  \centering
  \includegraphics[width=0.8\linewidth]
  {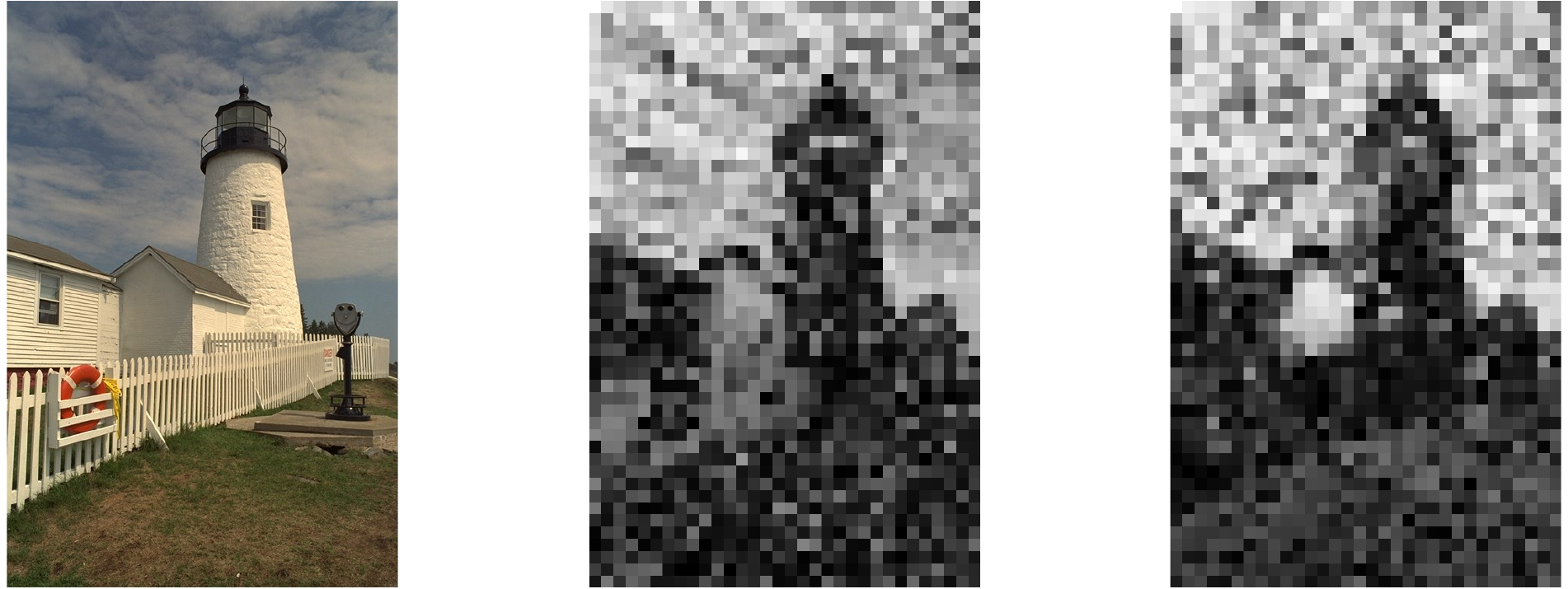}
  \caption{Cosine similarity in the spatial domain of different slices of latent representation of Kodim19 extracted by Cheng'20~\cite{cheng2020learned} (optimized for MSE, $\lambda=0.0483$)
  to visualize slices share similar global correlations. {The similarity maps are nearest-neighbor upsampled for visualization.}}
  \label{intra_cosine}
\end{figure}
\begin{figure*}
  \centering
  \includegraphics[width=\linewidth]
  {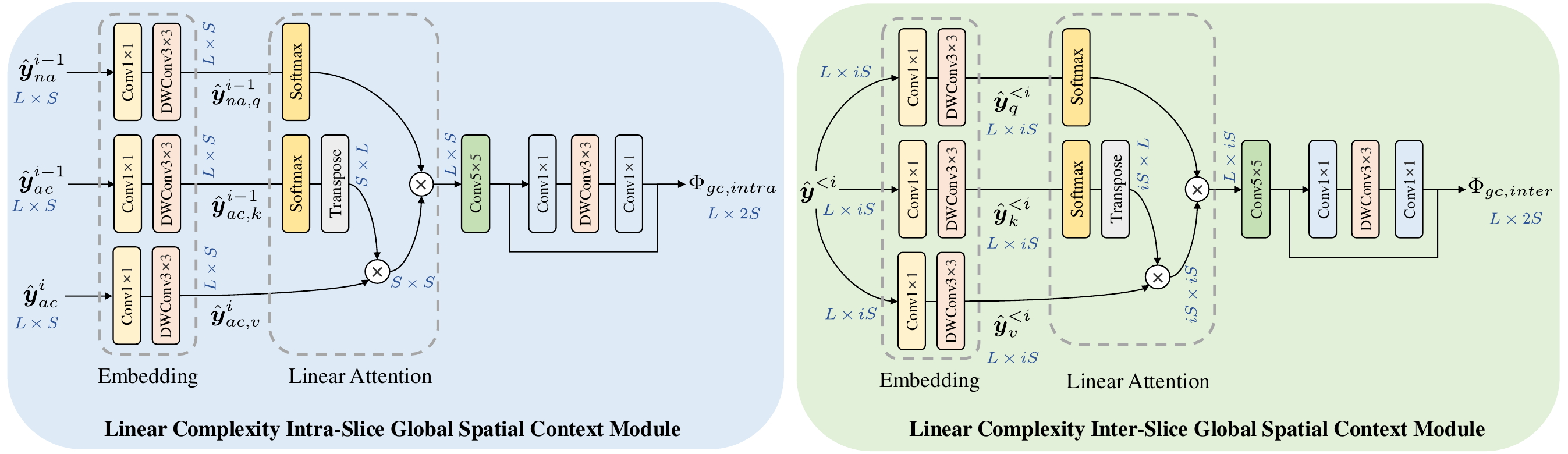}
  \caption{{Architectures of Linear Complexity Intra-Slice Global Spatial Context Module and
  Linear Complexity Inter-Slice Global Spatial Context Module.}}
  \label{fig:ggc}
  \end{figure*}
\subsection{Checkerboard Attention-based Local Context Module}
One limitation of CNN-based local context modules is their fixed weights,
which restricts their ability to capture content-adaptive contexts.
We argue that context-adaptation is essential due to the vast diversity of images.
In transformers~\cite{vas2017attention,dosovitskiy2020image,liu2021swin},
the attention weight is generated dynamically according to the input,
which inspires us to design a transformer-based content-adaptive local context module.
The local receptive field can be envisioned as a window,
where local spatial contexts are captured by dividing the feature map into windows.
Since each symbol is most relevant to the symbols around it, we propose to
make the divided windows \textit{overlapped}. To achieve this,
we propose the novel checkerboard attention context module $g_{lc, attn}$.
The process of $i$-th slice is taken as an example.
Assuming the resolution of the latent representation $\hat {\boldsymbol{y}}^i$ is $H \times W$,
the stride is set to $1$ to divide $\hat {\boldsymbol{y}}^i$
into $H \times W$  overlapped windows and the window size is $K\times K$.
To extract local correlations, the attention map of each window is computed at first.
Same as the convolutional checkerboard context module, interactions between $\boldsymbol{y}_{ac}^i$ and $\boldsymbol{y}_{na}^i$
and interactions in $\boldsymbol{y}_{na}^i$ are not allowed. An example of the attention mask is illustrated
in Fig.~\ref{ckbd_attn}. Importantly, this attention mechanism does not alter the resolution of each window.
Subsequently, a $K\times K$ convolutional layer is utilized to fuse local context information
{and} match the size of the local context with that of $\boldsymbol{y}^i$ before feeding
it to a feed-forward network (FFN)~\cite{vas2017attention}. The overall process is similar to standard transformer~\cite{vas2017attention}.
The process is formulated as:
\begin{equation}
    {{\hat {\boldsymbol{y}}}^i}_{attn} = \textrm{softmax}\left(\frac{{{\hat {\boldsymbol{y}}}^i}_{ac,q} \times ({{\hat {\boldsymbol{y}}}^i}_{ac,k})^\top}{\sqrt{S}} + mask\right) \times {{\hat {\boldsymbol{y}}}^i}_{ac,v},
  \end{equation}
    \begin{equation}
    {{\hat {\boldsymbol{y}}}^i}_{conv} = \textrm{conv}_{K\times K}({{\hat {\boldsymbol{y}}}^i}_{attn}),
  \end{equation}
  \begin{equation}
    {\boldsymbol{\Phi}}^i_{lc} = \textrm{FFN}({{\hat {\boldsymbol{y}}}^i}_{conv}) + {{\hat {\boldsymbol{y}}}^i}_{conv},
\end{equation}
where $\hat {\boldsymbol{y}}^i_{ac,q}, \hat {\boldsymbol{y}}^i_{ac,k}, \hat {\boldsymbol{y}}^i_{ac,v} = \textrm{Embed}(\hat {\boldsymbol{y}}^i_{ac})$,
$\hat {\boldsymbol{y}}^i_{ac}$ is anchor part of $i$-th slice, $mask$ the attention mask,
$S$ is the channel number of each slice, FFN is the feed-forward neural network~\cite{vas2017attention}.\par
Note that our overlapped window-partition is with \textit{linear} complexity, since the complexity of each window
is $\Omega(K^4)$.
The complexity of $g_{lc, attn}$ is $\Omega(K^4SL)$,
where $L=HW$, $S$ is the channel number of a slice.
\subsection{Linear Complexity Intra-Slice Global Context Module}
\label{sec:method:intra}
During the decoding process, it is challenging to determine the global
correlations between the current symbol and other symbols
due to the inherent \textit{encoding-decoding consistency}.
This is because the current symbol is unknown during decoding.
One potential solution is to embed the global correlations into the bit-stream,
but this approach introduces additional bits, thereby increasing the overall bit-rate.
Furthermore, in order to obtain precise global similarity,
it is necessary to calculate the similarity between the current symbol and \textit{all} other symbols,
which consumes a significant number of bits and is impractical to employ in real-world scenarios.
Consequently, representing global similarity with a limited number of bits or \textit{without} the need for additional bits
becomes a \textit{non-trivial} task.
In latent representation $\hat{\boldsymbol{y}} \in \mathbb{R}^{L\times C}$, where $L=H\times W$, $C$ is the channel number,
each channel contains distinct information, but they can be considered as thumbnails, as depicted in Fig.~\ref{fig:cosine}.
Notably, the channels exhibit \textit{similar} global similarities.
This is evident from the visualization of cosine similarities between two slices
of Cheng'20~\cite{cheng2020learned}, as visualized in Fig.~\ref{intra_cosine},
where despite differences in magnitude, the global correlations are similar.
When decoding the current slice $\hat {\boldsymbol{y}}^i \in \mathbb{R}^{L\times S}$, decoded slice
$\hat {\boldsymbol{y}}^{i-1} \in \mathbb{R}^{L\times S}$ assists in estimating the global correlations
in slice $\hat {\boldsymbol{y}}^{i}$.
However, a challenge arises in determining how to estimate these global correlations.
While cosine similarity may be useful,
it is fixed and may not accurately capture the features.
In this regard, attention maps prove to be a suitable choice.
The embedding layer is learnable, which make it flexible
to adjust the method for global correlations estimation by modifying queries, keys, and values.\par
First, the \textit{vanilla} approach is introduced.
The process of $i-1$-th slice and the $i$-th slice are taken as an example.
When compressing or decompressing $\hat {\boldsymbol{y}}^i$,
the correlations between anchor part $\hat {\boldsymbol{y}}^{i-1}_{ac}\in \mathbb{R}^{L\times S}$
and non-anchor part $\hat {\boldsymbol{y}}^{i-1}_{na}\in \mathbb{R}^{L\times S}$ of slice $\hat {\boldsymbol{y}}^{i-1}$ are first computed.
Because the checkerboard local context module makes anchor visible when decoding non-anchor part,
we multiply the anchor part of current slice $\hat {\boldsymbol{y}}^{i}_{a}$ with
the attention map between $\hat {\boldsymbol{y}}^{i-1}_{ac}$ and $\hat {\boldsymbol{y}}^{i-1}_{na}$,
which is employed as the approximation of global similarities between $\hat {\boldsymbol{y}}^{i}_{ac}$ and $\hat {\boldsymbol{y}}^{i}_{na}$.
Due to the local correlations, adjacent symbols have similar global correlations.
A $K\times K$ convolutional layer is employed to refine the attention
map by aggregating global similarities of adjacent symbols.
The process of this Intra-Slice Global Context $g_{gc, inter}$ is parallel and is formulated as:
\begin{equation}\label{eq:vanilla_intra}
\begin{aligned}
    &\hat {\boldsymbol{y}}^{i}_{attn} = \underbrace{\textrm{softmax}\left(\frac{{{\hat {\boldsymbol{y}}}^{i-1}}_{na,q} \times \left({{\hat {\boldsymbol{y}}}^{i-1}}_{ac,k}\right)^\top}{\sqrt{S}}\right)}_{\textrm{non-negative}} \times \hat {\boldsymbol{y}}^{i}_{ac,v},\\
    &\hat {\boldsymbol{y}}^{i}_{conv} = \textrm{conv}_{K\times K}(\hat {\boldsymbol{y}}^{i}_{attn}),\\
    &{\boldsymbol{\Phi}}^i_{gc,intra} = \textrm{DepthRB}(\hat {\boldsymbol{y}}^{i}_{conv}),
\end{aligned}
  \end{equation}
where $\hat {\boldsymbol{y}}^{i-1}_{na,q}, \hat {\boldsymbol{y}}^{i-1}_{ac,k} = \textrm{Embedding}\left(\hat {\boldsymbol{y}}^{i-1}\right)$, $\hat {\boldsymbol{y}}^{i}_{ac,v} = \textrm{Embedding}\left(\hat {\boldsymbol{y}}^i_{ac}\right)$,
Embedding is the embedding layer. Embedding layer consists of a $1\times 1$ convolutional layer and
a $3\times 3$ depth-wise convolutional layer. The $3\times 3$ depth-wise convolutional layer is
employed for learnable position embedding.
This is because the self attention is permutation-invariant and lacks inductive bias.
Using a depth-wise convolutional for position embedding does have serval benefits.
First, a $3\times 3$ depth-wise convolution is quite light, which has negligible
influences on overall complexity. Second, a convolution-based position embedding is flexible
for any resolution, due to its translation equivariance. Third, the convolution is
able to embed position information because of the zero-padding and the boundary effects~\cite{kayhan2020translation,islam2019much} of images.
$\textrm{DepthRB}$ is the depth-wise residual bottleneck~\cite{jiang2023slic} and is employed to enhance the non-linearity.\par
One drawback of \textit{vanilla} approach is its \textit{quadratic} complexity.
In Equation~\ref{eq:vanilla_intra}, the softmax operation specifies the order of tensor calculation.
The complexity of ${{\hat {\boldsymbol{y}}}^{i-1}}_{na,q} \times \left({{\hat {\boldsymbol{y}}}^{i-1}}_{a,k}\right)^\top$
is $O(L^2)$. The quadratic complexity leads to huge GPU memory consumption, longer
encoding and decoding time as illustrated in Fig.~\ref{fig:complex}, which makes it hard to employ the vanilla approach
for high-resolution image coding. In Equation~\ref{eq:vanilla_intra}, if $\left(\hat {\boldsymbol{y}}^{i-1}_{ac,k}\right)^\top\times \hat {\boldsymbol{y}}^{i}_{ac,v}$
is computed first, the overall complexity becomes $O(L)$, which is linear with the resolution.
Equation~\ref{eq:vanilla_intra} works because $0 < \textrm{softmax}\left(\frac{{{\hat {\boldsymbol{y}}}^{i-1}}_{na,q} \times \left({{\hat {\boldsymbol{y}}}^{i-1}}_{ac,k}\right)^\top}{\sqrt{S}}\right) < 1$.
The non-negativity makes it can be treated as a learnable similarity metric. If $\textrm{softmax}\left(\frac{{{\hat {\boldsymbol{y}}}^{i-1}}_{na,q} \times \left({{\hat {\boldsymbol{y}}}^{i-1}}_{ac,k}\right)^\top}{\sqrt{S}}\right) \to 0$,
${\hat {\boldsymbol{y}}}^{i-1}_{na,q}$ and ${\hat {\boldsymbol{y}}}^{i-1}_{na,q}$ are near orthogonal.
If $\textrm{softmax}\left(\frac{{{\hat {\boldsymbol{y}}}^{i-1}}_{na,q} \times \left({{\hat {\boldsymbol{y}}}^{i-1}}_{ac,k}\right)^\top}{\sqrt{S}}\right) \to 1$,
${\hat {\boldsymbol{y}}}^{i-1}_{na,q}$ and ${\hat {\boldsymbol{y}}}^{i-1}_{na,q}$ are very similar.
To solve the quadratic complexity, it is necessary to introduce a new operator which avoids the necessity to compute ${{\hat {\boldsymbol{y}}}^{i-1}}_{na,q} \times \left({{\hat {\boldsymbol{y}}}^{i-1}}_{a,k}\right)^\top$
first in practice while retaining the non-negativity. Efficient attention operation~\cite{shen2021efficient}
is introduced for non-negativity and linear complexity, which employ the softmax operation on $\hat{\boldsymbol{y}}^{i-1}_{na}$ in
row and the softmax operation on $\hat{\boldmath{y}}^{i-1}_{ac}$ in
column.
\begin{equation}
    \textrm{Attention} = \underbrace{\textrm{softmax}_2\left(\hat{\boldsymbol{y}}^{i-1}_{na,q}\right)\textrm{softmax}_1\left(\hat{\boldsymbol{y}}^{i-1}_{ac,k}\right)^{\top}}_{\textrm{non-negative}}\hat{\boldsymbol{y}}^i_{ac,v}.
    \label{eq:efficient}
\end{equation}
The process is illustrated in Equation~\ref{eq:efficient}.
In Equation~\ref{eq:efficient}, $\textrm{softmax}_2\left(\hat{\boldsymbol{y}}^{i-1}_{na,q}\right)\textrm{softmax}_1\left(\hat{\boldsymbol{y}}^{i-1}_{ac,k}\right)^{\top}$
is employed as the learnable similarity metric, where
 $0<\textrm{softmax}_2\left(\hat{\boldsymbol{y}}^{i-1}_{na,q}\right)< 1$, $0<\textrm{softmax}_1\left(\hat{\boldsymbol{y}}^{i-1}_{ac,k}\right)^{\top}< 1$,
 which makes $0<\textrm{softmax}_2\left(\hat{\boldsymbol{y}}^{i-1}_{na,q}\right)\\\textrm{softmax}_1\left(\hat{\boldsymbol{y}}^{i-1}_{ac,k}\right)^{\top}< 1$.
 The non-negativity makes
 $\textrm{softmax}_2\left(\hat{\boldsymbol{y}}^{i-1}_{na,q}\right)\textrm{softmax}_1\left(\hat{\boldsymbol{y}}^{i-1}_{ac,k}\right)^{\top}$
 can be employed as a similarity metric.
 If $\textrm{softmax}_2\left(\hat{\boldsymbol{y}}^{i-1}_{na,q}\right)\textrm{softmax}_1\left(\hat{\boldsymbol{y}}^{i-1}_{ac,k}\right)^{\top} \to 0$,
 ${\hat {\boldsymbol{y}}}^{i-1}_{na,q}$ and ${\hat {\boldsymbol{y}}}^{i-1}_{na,q}$ are near orthogonal.
 If $\textrm{softmax}_2\left(\hat{\boldsymbol{y}}^{i-1}_{na,q}\right)\textrm{softmax}_1\left(\hat{\boldsymbol{y}}^{i-1}_{ac,k}\right)^{\top} \\\to 1$,
 ${\hat {\boldsymbol{y}}}^{i-1}_{na,q}$ and ${\hat {\boldsymbol{y}}}^{i-1}_{na,q}$ are very similar.
 In addition, the $\textrm{softmax}_2\left(\hat{\boldsymbol{y}}^{i-1}_{na,q}\right)\textrm{softmax}_1\left(\hat{\boldsymbol{y}}^{i-1}_{ac,k}\right)$ is normalized,
 which makes each element can be treated probability.
 \par
 \begin{theorem}\cite{jiang2024ecvc}
  Same as the standard vanilla attention, 
  each row of the implicit similarity matrix 
  ${softmax}_2(\hat{\boldsymbol{y}}_{na,q}^i){softmax}_1(\hat{\boldsymbol{y}}_{na,k}^i)^\top$ 
  sums up to 1 and represents a normalized attention distribution over all positions.
  \end{theorem}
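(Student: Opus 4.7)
My approach is a direct index-level calculation. First I would fix notation: write $Q := \textrm{softmax}_2(\hat{\boldsymbol{y}}_{na,q}^i)$ and $K := \textrm{softmax}_1(\hat{\boldsymbol{y}}_{na,k}^i)$, both viewed as matrices in $\mathbb{R}^{L \times S}$, so the implicit similarity matrix is $A = QK^\top \in \mathbb{R}^{L \times L}$ and the claim reduces to showing that for every row index $i$, $\sum_{j=1}^L A_{ij} = 1$ and $A_{ij} \ge 0$.

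Second, I would unpack what the two independent softmax normalizations each guarantee, following the convention pinned down just above the theorem. Since $\textrm{softmax}_2$ is applied in the row direction (channel axis), each row of $Q$ is a probability vector: $\sum_{s=1}^S Q_{is} = 1$ for every $i$. Since $\textrm{softmax}_1$ is applied in the column direction (token/spatial axis), each column of $K$ is a probability vector: $\sum_{j=1}^L K_{js} = 1$ for every $s$. The entries of both $Q$ and $K$ are strictly positive because they are outputs of softmax, so every entry of $A$ is also non-negative.

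Third, I would expand the row sum of $A$ and swap the order of summation, invoking the two normalizations in sequence:
\begin{equation*}
\sum_{j=1}^L A_{ij} \;=\; \sum_{j=1}^L \sum_{s=1}^S Q_{is} K_{js} \;=\; \sum_{s=1}^S Q_{is} \Bigl(\sum_{j=1}^L K_{js}\Bigr) \;=\; \sum_{s=1}^S Q_{is} \;=\; 1,
\end{equation*}
where the third equality uses the column-sum property of $K$ and the final equality uses the row-sum property of $Q$. Combined with non-negativity, this shows each row of $A$ is a bona fide probability distribution over the $L$ token positions, matching the normalization behavior of vanilla attention.

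The only genuine obstacle is bookkeeping: the axes on which $\textrm{softmax}_1$ and $\textrm{softmax}_2$ act must be complementary (one along the $L$ axis, the other along the $S$ axis) for the swap of summation to collapse to a constant $1$; if both softmaxes were taken along the same axis the argument would fail. Once the axis convention is pinned down in the notation step, the argument is a routine Fubini-style exchange and carries no analytic subtlety. A small remark I would add at the end is that this derivation shows normalization holds \emph{without ever materializing the $L \times L$ attention map}, which is precisely what enables the linear-complexity reassociation $K^\top \hat{\boldsymbol{y}}^i_{ac,v}$ first used in the preceding subsection.
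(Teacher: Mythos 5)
Your proposal is correct and is essentially the paper's own argument: both define $Q=\textrm{softmax}_2(\hat{\boldsymbol{y}}_{na,q}^i)$ and $K=\textrm{softmax}_1(\hat{\boldsymbol{y}}_{na,k}^i)$, invoke the row-normalization of $Q$ over the channel axis and the column-normalization of $K$ over the spatial axis, and collapse the row sum of $QK^\top$ by exchanging the order of summation, exactly as in the paper's explicit expansion of $\textrm{Sum}_{\ell}$. Your version merely states the Fubini-style swap compactly (and adds the easy non-negativity remark) rather than writing out the matrices entrywise, so there is no substantive difference.
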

  \begin{proof}
  It is evident that each row of the the similarity matrix of the standard attention sums up to $1$.\par
  Let ${softmax}_2(\hat{\boldsymbol{y}}_{na,q}^i)=Q\in \mathbb{R}^{L\times C}$, 
  ${softmax}_1(\hat{\boldsymbol{y}}_{na,k}^i)^\top=K\in \mathbb{R}^{C\times L}$, where
  \begin{equation}
      Q=\begin{bmatrix}
  q_{1,1} & q_{1,2}  & \cdots   & q_{1,C}   \\
  q_{2,1} & q_{2,2}  & \cdots   & q_{2,C}  \\
  \vdots & \vdots  & \ddots   & \vdots  \\
  q_{L,1} & q_{L,2}  & \cdots\  & q_{L,C}  \\
       \end{bmatrix},
      \end{equation}
      \begin{equation}
  K^\top=\begin{bmatrix}
  k_{1,1} & k_{1,2}  & \cdots   & k_{1,L}   \\
  k_{2,1} & k_{2,2}  & \cdots   & k_{2,L}  \\
  \vdots & \vdots  & \ddots   & \vdots  \\
  k_{C,1} & k_{C,2}  & \cdots\  & k_{C,L}  \\
       \end{bmatrix}.
  \end{equation}
  Due to the characteristics of softmax operation, we can obtain 
  $\sum_{i=1}^C q_{j, i}=1$, where $1\leq j\leq L$; $\sum_{i=1}^L k_{j,i}=1$, 
  where $1\leq j\leq C$.
  \begin{equation}
      QK^\top=\begin{bmatrix}
  \sum_{i=1}^C q_{1,i}k_{i,1}   & \cdots   & \sum_{i=1}^C q_{1,i}k_{i,L}   \\
  \sum_{i=1}^C q_{2,i}k_{i,1}     & \cdots & \sum_{i=1}^C q_{2,i}k_{i,L}  \\
  \vdots   & \ddots   & \vdots  \\
  \sum_{i=1}^C q_{L,i}k_{i,1}  & \cdots\  & \sum_{i=1}^C q_{L,i}k_{i,L}  \\
       \end{bmatrix}.
  \end{equation}
  The sum of $\ell$-th row is taken as an example.
  \begin{equation}
  \begin{aligned}
      \textrm{Sum}_{\ell} &= \sum_{i=1}^C q_{\ell,i}k_{i,1} + \sum_{i=1}^C q_{\ell,i}k_{i,2} + \cdots + \sum_{i=1}^C q_{\ell,i}k_{i,L}\\
      &= \left(q_{\ell, 1}k_{1,1} + q_{\ell, 2}k_{2,1} + \cdots + q_{\ell, C}k_{C,1}\right) \\&+ \left(q_{\ell, 1}k_{1,2} + q_{\ell, 2}k_{2, 2} + \cdots + q_{\ell, C}k_{C,2}\right) + \cdots \\& + \left(q_{\ell, 1}k_{1,L} + q_{\ell, 2}k_{2, L} + \cdots + q_{\ell, C}k_{C,L}\right)\\
      &= q_{\ell, 1}\underbrace{\sum_{i=1}^Lk_{1,i}}_{1} + q_{\ell, 2}\underbrace{\sum_{i=1}^Lk_{2,i}}_{1} + \cdots + q_{\ell, C}\underbrace{\sum_{i=1}^Lk_{C,i}}_{1} \\&= \sum_{i=1}^C q_{\ell, i} =1.
  \end{aligned}
  \end{equation}
  \end{proof}
 The metric is \textit{implicit}  because there is no need to compute $\textrm{softmax}_2\left(\hat{\boldsymbol{y}}^{i-1}_{na,q}\right)\textrm{softmax}_1\left(\hat{\boldsymbol{y}}^{i-1}_{ac,k}\right)^{\top}$.
  Since we use softmax operation on $\hat{\boldsymbol{y}}^{i-1}_{na,q}$ and
  $\hat{\boldsymbol{y}}^{i-1}_{ac,k}$ separately,
  $\textrm{softmax}_1\left(\hat{\boldsymbol{y}}^{i-1}_{ac,k}\right)^{\top}\hat{\boldsymbol{y}}^{i}_{ac,v}$ can be computed first during training and testing.
  The complexity of it is $O(L)$, which is linear with the resolution.
  The linear complexity makes it easier to employ the global spatial context module
  for high-resolution image coding.
  \subsection{Linear Complexity Inter-Slice Global Context Module}
\label{sec:method:inter}
Because of the global correlations between slices,
intra-slice global context module is extended to the inter-slice global context.
For a symbol at current slice, the symbol in previous slices is employed at the same position as
the approximation of the symbol at current slice,
since there are correlations among slices.
The correlations among slices or channels are illustrated in Fig.~\ref{fig:cosine} and Fig.~\ref{intra_cosine}.
This approximation makes
the anchor part and non-anchor part benefit from more contexts.
The process of $i$-th slice is taken as an example.
Same as linear complexity intra-slice global context module, attention mechanism is
employed to measure the similarity. To make the inter-slice global
context capturing more efficient, the softmax operation in \textit{vanilla} attention
is divided into two independent softmax operations as discussed in Section~\ref{sec:method:intra}.
The overall process is
\begin{equation}\label{eq:vanilla_inter}
\begin{aligned}
  &\hat {\boldsymbol{y}}^{i}_{attn} = \underbrace{\textrm{softmax}_2\left(\hat{\boldsymbol{y}}^{< i}_q\right)\textrm{softmax}_1\left(\hat{\boldsymbol{y}}^{< i}_k\right)^{\top}}_{\textrm{non-negative}}\hat{\boldsymbol{y}}^{< i}_v,\\
  &\hat {\boldsymbol{y}}^{i}_{conv} = \textrm{conv}_{K\times K}(\hat {\boldsymbol{y}}^{i}_{attn}),\\
  &{\boldsymbol{\Phi}}^i_{gc,inter} = \textrm{DepthRB}(\hat {\boldsymbol{y}}^{i}_{conv}),
\end{aligned}
\end{equation}
where $\hat {\boldsymbol{y}}^{< i}_q, \hat {\boldsymbol{y}}^{< i}_{k}, \hat {\boldsymbol{y}}^{< i}_{v} = \textrm{Embedding}(\hat {\boldsymbol{y}}^{< i})$,
Embedding is the embedding layer. Embedding layer consists of a $1\times 1$ convolutional layer and
a $3\times 3$ depth-wise convolutional layer. The $3\times 3$ depth-wise convolutional layer is
employed for learnable position embedding.
$\textrm{DepthRB}$ is the depth-wise residual bottleneck~\cite{jiang2023slic} and is employed to enhance the non-linearity.
Since the $\textrm{softmax}_1\left(\hat{\boldsymbol{y}}^{< i}_k\right)^{\top}\hat{\boldsymbol{y}}^{< i}_v$ is computed
during training and testing, the similarity metric is \textit{implicit} and the overall complexity is $O(L)$, which is linear with the resolution.
  \section{Experiments}
  \label{sec:exp}
  \subsection{Implementation Details}
  \label{sec:exp:setup}
  \subsubsection{Training Dataset}
  Our training datasets contains $10^5$ images\footnote[1]{\tt\url{https://github.com/JiangWeibeta/MLIC/blob/main/train_list.txt}}\footnote[2]{\tt\url{https://huggingface.co/datasets/Whiteboat/MLIC-Train}}
  with resolutions exceeding $512\times 512$.
  These images are selected from ImageNet~\cite{deng2009imagenet},
  COCO 2017~\cite{lin2014microsoft}, DIV2K~\cite{Agustsson2017NTIRE2C}, and Flickr2K~\cite{lim2017enhanced}.
  To address the existing compression artifacts in JPEG images,
  we follow the approach of Ball{\'e} \textit{et al}~\cite{balle2018variational}
  by further down-sampling the JPEG images using a randomized factor.
  This downsampling process ensures that the minimum height or width of the images falls within the range of 512 to 584 pixels,
  effectively reducing the compression artifacts.
  \subsubsection{Training Strategy}
  MLIC$^{++}$ is built on Pytorch~\cite{paszke2019pytorch} and
  CompressAI~\cite{begaint2020compressai}.
  Following the settings of CompressAI~\cite{begaint2020compressai},
  we set $\lambda \in \{18, 35, 67, 130, 250, 483\} \times 10^{-4}$ for MSE
  and set $\lambda \in \{2.4, 4.58, 8.73, 16.64, 31.73, 60.5\}$ for Multi-Scale Structural Similarity (MS-SSIM)~\cite{wang2003multiscale}.
  The batch size is set to $32$ and models are trained on a single Tesla A100 GPU.
  Each model is trained with an Adam optimizer with $\beta_1=0.9, \beta_2=0.999$.
  We train each model for 2M steps.
  \begin{figure*}[t]
    \centering
    \subfloat{
      \includegraphics[scale=0.47]{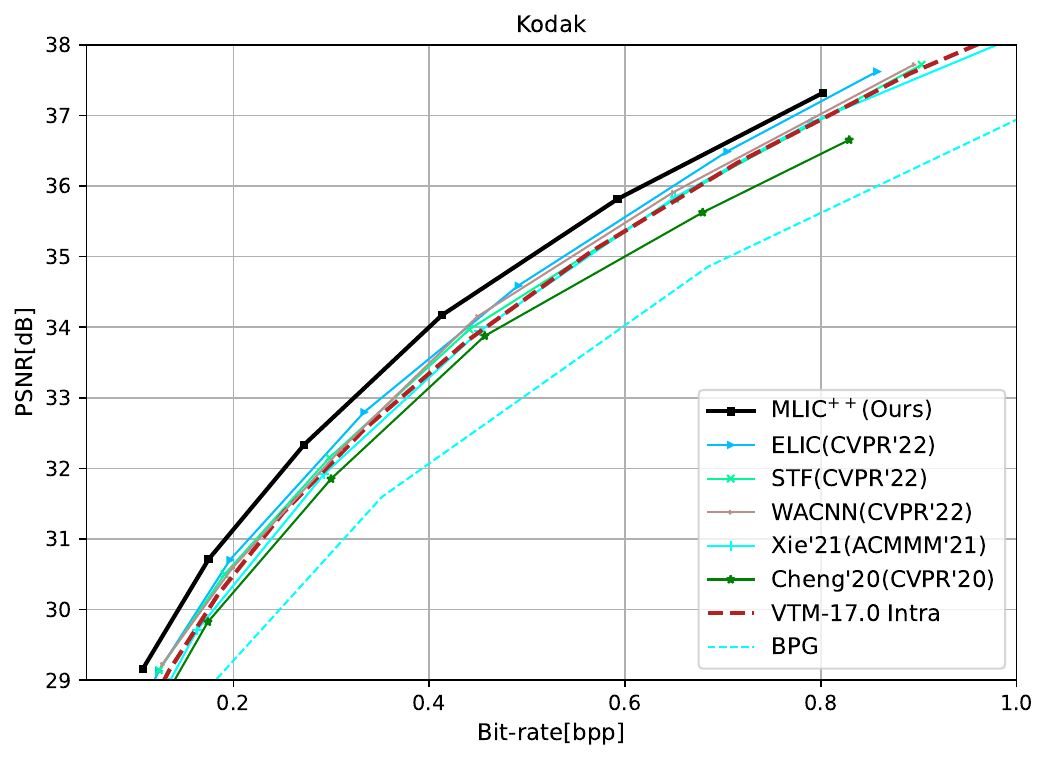}}
    \subfloat{
      \includegraphics[scale=0.47]{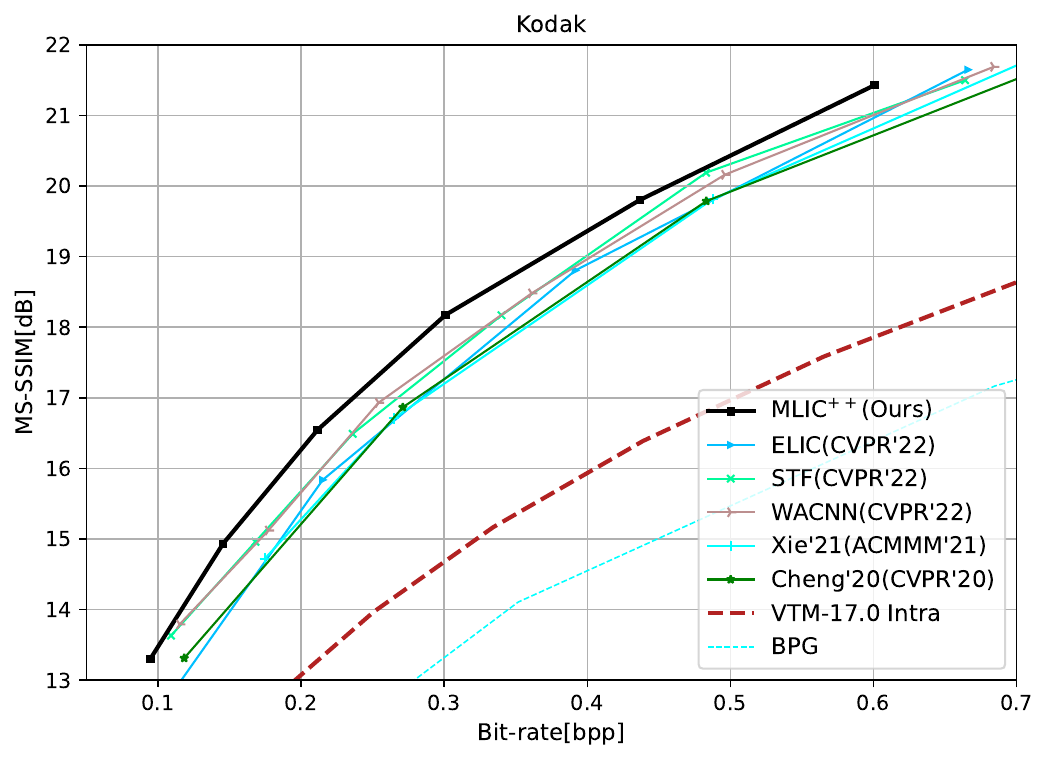}}\\
      \subfloat{
        \includegraphics[scale=0.47]{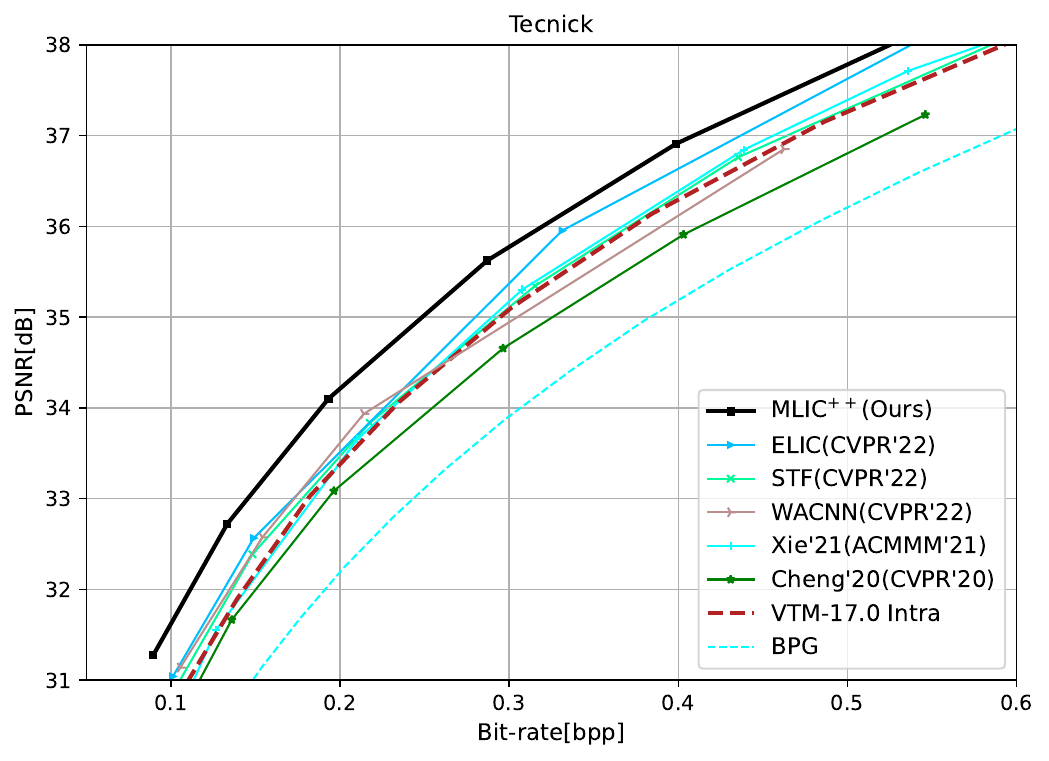}}
      \subfloat{
        \includegraphics[scale=0.47]{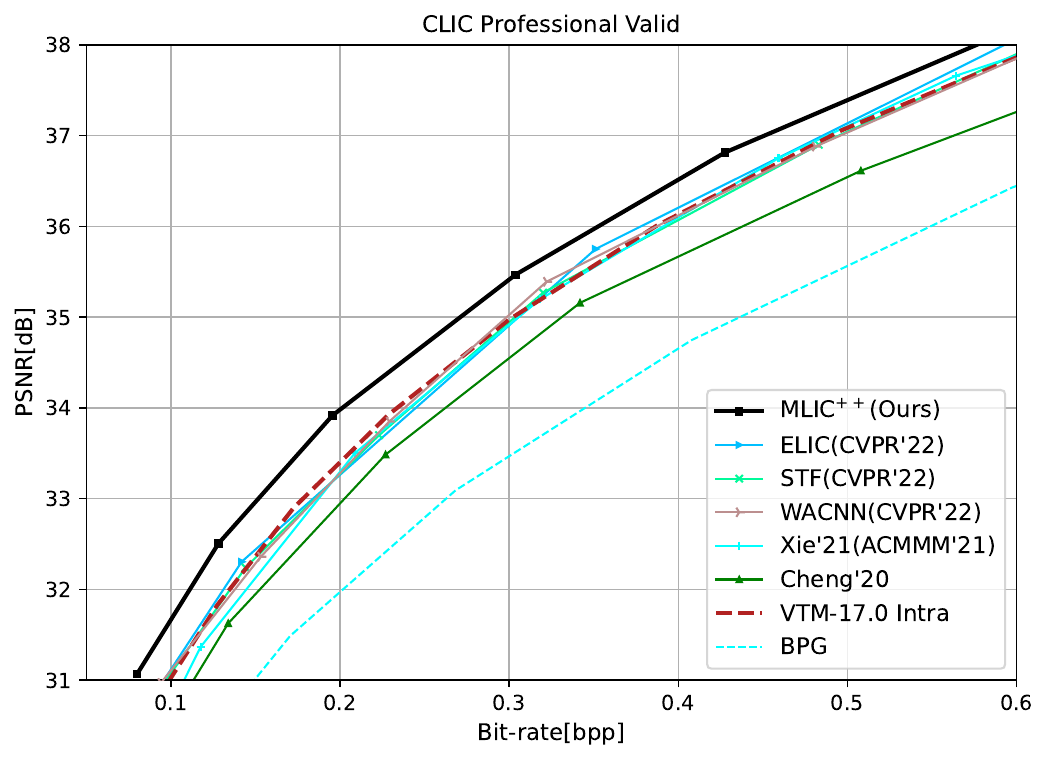}}
    \caption{PSNR-Bit-rate and MS-SSIM-Bit-rate curves.
    MS-SSIM is converted to dB for better clarity. Please zoom in for better view.}
    \label{fig:rd}
  \end{figure*}
The learning rate starts at $10^{-4}$ and drops to $3\times 10^{-5}$ at 1.5M steps,
drops to $10^{-5}$ at 1.8M steps, and drops to $3 \times 10^{-6}$ at 1.9M steps,
drops to $10^{-6}$ at 1.95M steps.
During training, we random crop images to $256\times 256$ patches during the
first $1.2$M steps. To further exploit the effectiveness of global context modules, images are cropped to $512\times 512$ patches
during the rest steps.
Large patches are beneficial for learning global references.
The latent representation can be sparse due to checkerboard partition.
A large latent representation makes it more difficult for the model to
capture the global contexts and improves the generalization ability of the model across different resolutions.
\subsection{Benchmarks and Metrics}
To thoroughly assess the generalization capability of the learned image compression models,
we conduct performance evaluations on three distinct datasets, including Kodak~\cite{kodak}, Tecnick~\cite{tecnick2014TESTIMAGES},
  CLIC Professional Valid~\cite{CLIC2020}.
  \begin{itemize}
  \item Kodak~\cite{kodak} is selected as a test set for almost
  all end-to-end image compression models~\cite{balle2016end,theis2017lossy, balle2018variational,minnen2018joint,cheng2020learned,minnen2020channel,chen2021nic,guo2021causal,wu2021learned,xie2021enhanced,gao2021neural,chen2022two,zou2022the,he2022elic,koyuncu2022contextformer,jiang2022mlic,duan2023lossy,liu2023learned}.
  It contains 24 raw $768\times 512$ images.
  \item Tecnick~\cite{tecnick2014TESTIMAGES} contains 100 $1200\times 1200$ RGB images,
  which is used as a test set in many methods~\cite{balle2018variational,minnen2018joint,xie2021enhanced,kim2022joint,jiang2022mlic,jiang2023slic,liu2023learned,zhu2022transformerbased,duan2023lossy,pan2022content}.
  \item CLIC Professional Valid~\cite{CLIC2020} is the validation set of 3rd Challenge on Learned Image Compression
  which contains $41$ images. Image in this dataset contains around $2048\times 1440$ pixels.
  CLIC Pro Valid is widely used in many recent methods~\cite{cheng2020learned,xie2021enhanced,he2022elic,zou2022the,liu2023learned,jiang2022mlic,wang2022neural,zhu2022unified,duan2023lossy}.
\end{itemize}
\par
We use the Bjøntegaard delta rate (BD-Rate)~\cite{bjontegaard2001calculation}
to evaluate the performance of learned image compression models.
\subsection{Rate-Distortion Performance}
\label{sec:exp:perf}
\subsubsection{Quantitive Results}
Rate-distortion curves are presented in Fig.~\ref{fig:rd}.
When compared with Cheng'20~\cite{cheng2020learned},
our proposed MLIC$^{++}$ can achieve a maximum
improvement of $0.5\sim0.8$ dB in PSNR and achieve a maximum improvement of $0.6$ dB in MS-SSIM in dB.
Our MLIC$^{++}$ adopts simplified analysis transform and synthesis transform
of Cheng'20~\cite{cheng2020learned}, therefore, the improvement of
model performance is attributed to our linear complexity multi-reference entropy
modeling. Our linear complexity multi-reference entropy models can capture
more contexts, which leads to much better rate-distortion performance.
The improvement also proves correlations exist in multiple dimensions
since Cheng'20~\cite{cheng2020learned} adopts an spatial auto-regressive context module.
Compared with ELIC~\cite{he2022elic}, our MLIC$^{++}$ can be up to $0.4$ dB higher at low bit rates in PSNR and
$1$ dB higher in MS-SSIM~\cite{wang2003multiscale}.\par
BD-rate reductions are presented in Table~\ref{tab:rd}.
\begin{table*}[t]
  \centering
  \footnotesize
  \setlength{\tabcolsep}{2.4mm}{
  \begin{tabular}{@{}cccccccccccccc@{}}
  \toprule
  \multicolumn{1}{c|}{\multirow{2}{*}{Methods}}                            & \multicolumn{2}{c}{Kodak}    & \multicolumn{2}{c}{Tecnick}  & \multicolumn{2}{c}{CLIC Professional Valid}  \\
  \multicolumn{1}{c|}{}                                                     & \multicolumn{1}{c}{PSNR} & \multicolumn{1}{c}{MS-SSIM}& \multicolumn{1}{c}{PSNR} & \multicolumn{1}{c}{MS-SSIM}& \multicolumn{1}{c}{PSNR} & \multicolumn{1}{c}{MS-SSIM} \\ \midrule
  \multicolumn{1}{c|}{VTM-17.0 Intra~\cite{bross2021vvc}}                                        & $0.00$       & $0.00$ & $0.00$       & $0.00$ & $0.00$       & $0.00$       \\\midrule
  \multicolumn{1}{c|}{Cheng'20 (CVPR'20)~\cite{cheng2020learned}}                                        & $+5.58$       & $-44.21$ & $+7.57$       & $-39.61$ & $+11.71$       & $-41.29$       \\\midrule
  \multicolumn{1}{c|}{Minnen'20 (ICIP'20)~\cite{minnen2020channel}}                                        & $+3.23$       & $-$ & $-0.88$       & $-$ & $-$       & $-$       \\\midrule
  \multicolumn{1}{c|}{Qian'21 (ICLR'21)~\cite{qian2020learning}}                                        & $+10.05$       & $-39.53$ & $+7.52$       & $-$ & $+0.02$       & $-$       \\\midrule
  \multicolumn{1}{c|}{Xie'21 (ACMMM'21)~\cite{xie2021enhanced}}                                        & $+1.55$       & $-43.39$ & $+3.21$       & $-$ & $+0.99$       & $-$       \\\midrule
  \multicolumn{1}{c|}{Guo'22 (TCSVT'22)~\cite{guo2021causal}}                                        & $-4.45$       & $-45.23$ & $-$       & $-$ & $-$       & $-$       \\\midrule
  \multicolumn{1}{c|}{LBHIC (TCSVT'22)~\cite{wu2021learned}}                                        & $-4.56$       & $-50.54$ & $-$       & $-$ & $-$       & $-$       \\\midrule
  \multicolumn{1}{c|}{Entroformer (ICLR'22)~\cite{qian2022entroformer}}                                        & $+4.73$       & $-42.64$ & $+2.31$       & $-$ & $-1.04$       & $-$       \\\midrule
  \multicolumn{1}{c|}{SwinT-Charm (ICLR'22)~\cite{zhu2022transformerbased}}                & $-1.73$      & $-$   & $-$& $-$& $-$& $-$  \\\midrule
  \multicolumn{1}{c|}{NeuralSyntax (CVPR'22)~\cite{wang2022neural}}                & $+8.97$      & $-39.56$   & $-$& $-$& $+5.64$& $-38.92$  \\\midrule
  \multicolumn{1}{c|}{McQuic (CVPR'22)~\cite{zhu2022unified}}                & $-1.57$      & $-47.94$   & $-$& $-$& $+6.82$& $-40.17$  \\\midrule
  \multicolumn{1}{c|}{STF (CVPR'22)~\cite{zou2022the}}         & $-2.48$      & $-47.72$  & $-2.75$      & $-$   & $+0.42$      & $-$      \\\midrule
  \multicolumn{1}{c|}{WACNN (CVPR'22)~\cite{zou2022the}}         & $-2.95$      & $-47.71$  & $-5.09$      & $-$   & $+0.04$      & $-$      \\\midrule
  \multicolumn{1}{c|}{ELIC (CVPR'22)~\cite{he2022elic}}                             & $-5.95$      & $-44.60$   & $-9.14$      & $-$ & $-3.45$      & $-$   \\\midrule
  \multicolumn{1}{c|}{Contextformer (ECCV'22)~\cite{koyuncu2022contextformer}}        & $-5.77$      & $-46.12$ & $-9.05$      & $-42.29$  & $-$      & $-$     \\\midrule
  \multicolumn{1}{c|}{Pan'22 (ECCV'22)~\cite{pan2022content}}        & $+7.56$      & $-36.20$ & $+3.97$      & $-$  & $-$      & $-$     \\\midrule
  \multicolumn{1}{c|}{NVTC (CVPR'23)~\cite{feng2023nvtc}}        & $-1.04$      & $-$ & $-$      & $-$  & $-3.61$      & $-$     \\\midrule
  \multicolumn{1}{c|}{LIC-TCM (CVPR'23)~\cite{liu2023learned}}        & $-10.14$      & $-48.94$ & $-11.47$      & $-$  & $-8.04$      & $-$     \\\midrule
  \multicolumn{1}{c|}{MLIC (ACMMM'23)~\cite{jiang2022mlic}}                                                    & $-8.05$      & $-49.13$ & $-12.73$      & $-47.26$ & $-8.79$      & $-45.79$    \\\midrule
  \multicolumn{1}{c|}{MLIC$^+$ (ACMMM'23)~\cite{jiang2022mlic}}                                                & $-11.39$      & $-52.75$    & $-16.38$      & $ -53.54$    & $-12.56$      & $-48.75$      \\\midrule
  \multicolumn{1}{c|}{{QARV (TPAMI'24)}~\cite{duan2023qarv}}                                                & {$+0.31$}      & {$-$}    & {$-3.03$}      & {$-$}    & {$-$}      & {$-$}      \\\midrule
  \multicolumn{1}{c|}{{FTIC (ICLR'24)}~\cite{li2024frequencyaware}}                                                & {$-12.99$}      & {$-51.13$}    & {$-14.88$}      & {$-$}    & {$-9.53$}      & {$-$}      \\\midrule
  \multicolumn{1}{c|}{{LLIC-TCM (TMM'24)}~\cite{jiang2023slic}}                                                & {$-10.94$}      & {$-49.73$}    & {$-14.99$}      & {$-$}    & {$-10.41$}      & {$-$}      \\\midrule
  \multicolumn{1}{c|}{MLIC$^{++}$ (Ours)}                                                & \bm{$-13.39$}      & \bm{$-53.63$}  & \bm{$-17.59$} &\bm{$-53.83$}& \bm{$-13.08$}& \bm{$-50.78$}   \\\midrule 
\end{tabular}}
\caption{BD-Rate $(\%)$ comparison for PSNR (dB) and MS-SSIM. The anchor is VTM-17.0 Intra. {``--'' means the result is not available}.} 
  \label{tab:rd}
\end{table*}
\begin{figure*}
  \centering
  \includegraphics[width=0.75\linewidth]
  {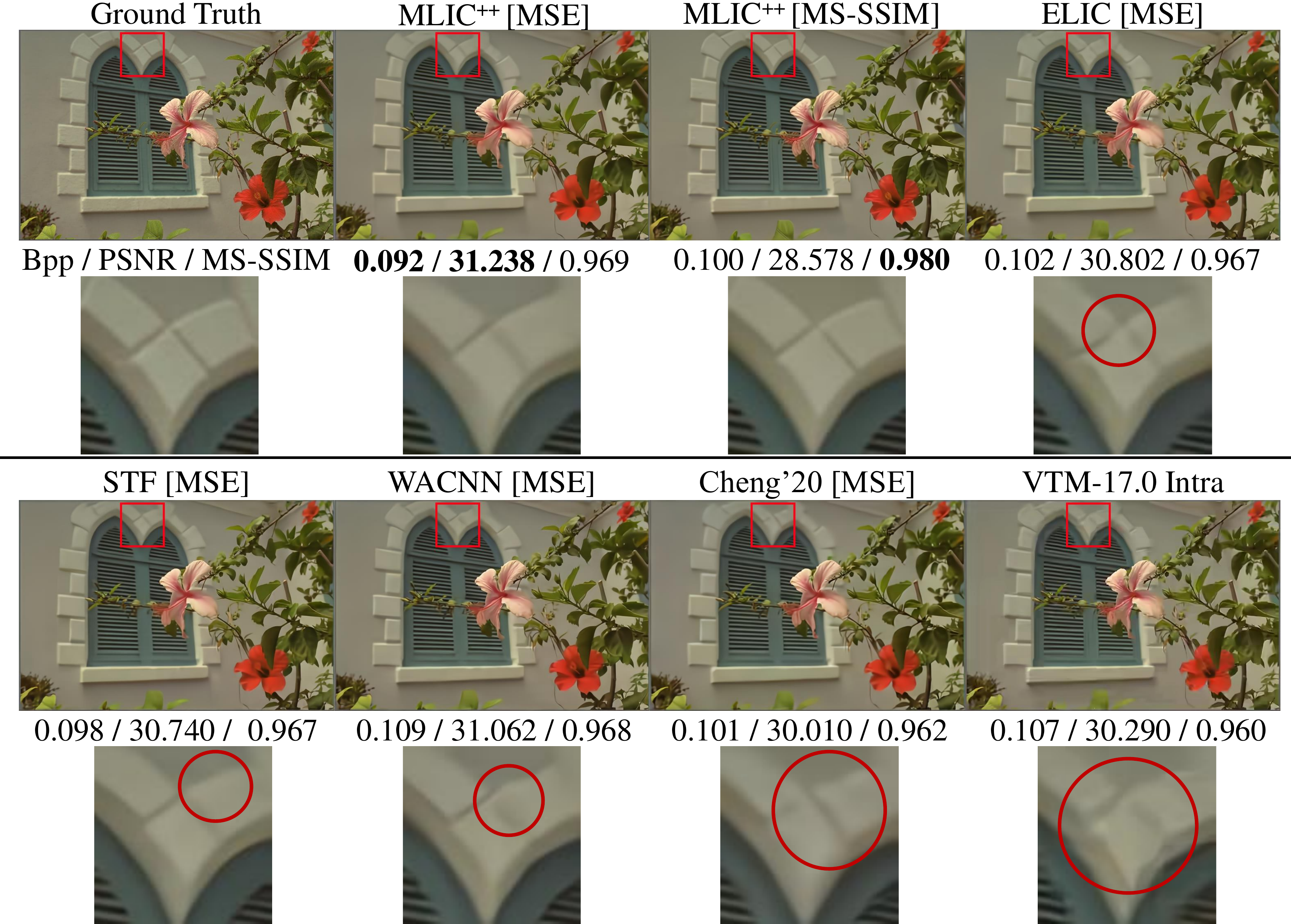}
  \caption{{Reconstructions of our proposed MLIC$^{++}$, recent learned image compression models~\cite{he2022elic,zou2022the,xie2021enhanced,cheng2020learned} and VTM-17.0 Intra~\cite{bross2021vvc}.
  “[MSE]” denotes the model is optimized for MSE and “[MS-SSIM]” denotes the model is optimized for MS-SSIM~\cite{wang2003multiscale}. Please zoom in for better view.}}
  \label{fig:vis}
  \end{figure*}
When computing BD-rate, VTM-17.0 Intra under YUV444 is employed as anchor. Unofficial weights of ELIC\footnote[3]{\tt\url{https://github.com/VincentChandelier/ELiC-ReImplemetation}} are used to
evaluate the rate-distortion performance of ELIC on Tecnick, CLIC Professional Valid.
Our MLIC$^{++}$ outperforms our previous MLIC and MLIC$^+$~\cite{jiang2022mlic}.
Our MLIC$^{++}$ reduces BD-rate by $13.39\%$ over VTM-17.0 Intra on Kodak while
existing method ELIC only reduces $5.95\%$ BD-rate and LIC-TCM
only reduces $10.14\%$ BD-rate on Kodak. Moreover, our MLIC$^{++}$ performs better
on high-resolution datasets, such as Tecnick
and CLIC Professional Valid. Our MLIC$^{++}$ reduces $17.59\%$ on Tecnick
and reduces $13.08\%$ BD-rate on CLIC Professional Valid, which are much better than existing methods.
Our MLIC$^{++}$ achieves state-of-the-art performance on all three datasets. The excellent performance of our MLIC$^{++}$ on these datasets
also demonstrates the excellent generalization of our MLIC$^{++}$.
We highlight the BD-rate for MS-SSIM, our proposed
MLIC$^{++}$ reduces about $50\%$ bits compared to VTM-17.0 Intra, which is a
large progress in learned image compression.
\subsubsection{Qualitive Results}
To further demonstrate superiority of our proposed MLIC$^{++}$, we compare our
MLIC$^{++}$ with learned image compression models Cheng'20~\cite{cheng2020learned},
Xie'21~\cite{xie2021enhanced}, STF~\cite{zou2022the}, WACNN~\cite{zou2022the},
ELIC~\cite{he2022elic} and non-neural codec VTM-17.0 Intra~\cite{bross2021vvc} on
perceptual quality. Fig.~\ref{fig:vis} presents the reconstructions of Kodim07 from Kodak.
PSNR value of the image reconstructed by our MLIC$^{++}$ optimized for MSE is $1$dB
  higher than image reconstructed by VTM-17.0 Intra~\cite{bross2021vvc}.
  MS-SSIM of the image reconstructed by our MLIC$^{++}$
  optimized for MS-SSIM is $0.02$ higher than image reconstructed by VTM-17.0 Intra.
  The windowsill of reconstructions are cropped to patches for clearer comparisons.
  {Upon closer inspection, certain regions within the images 
show more pronounced differences. For instance, in the {red-circled areas} of Figure~\ref{fig:vis}.
Moreover, our MSE-optimized MLIC$^{++}$ model achieves {lower bpp values—ranging from $84.4\%$ to $93.8\%$} of those in other models like ELIC, STF, 
WACNN, and Cheng'20—while maintaining competitive image quality. This demonstrates the efficiency and effectiveness of our approach.}
  \begin{figure*}[t]
    \centering
    \subfloat{
      \includegraphics[scale=0.41]{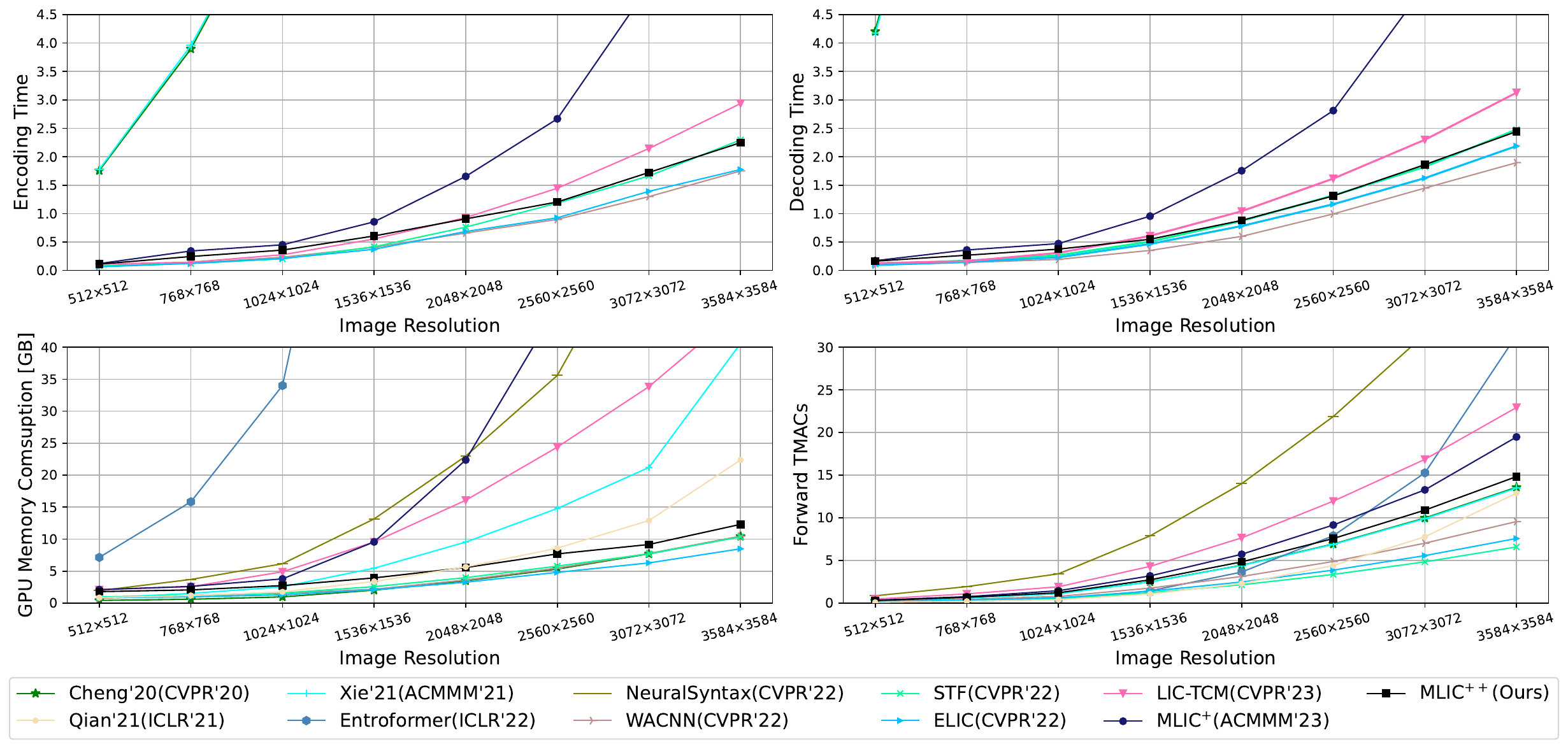}}
      \caption{{GPU memory consumption, encoding time, decoding time comparisons among our proposed models and recent learned image compression models~\cite{cheng2020learned,xie2021enhanced,qian2022entroformer,zou2022the,he2022elic,liu2023learned}.
      Encoding and decoding time of Qian'21~\cite{qian2020learning}, Entroformer~\cite{qian2022entroformer}, NeuralSyntax~\cite{wang2022neural} exceed 4.5s. Please zoom in for better view.}}
    \label{fig:complex}
  \end{figure*}
  \begin{table*}[!ht]
    \scriptsize
    \centering
    \setlength{\tabcolsep}{1.6mm}{
    \begin{tabular}{@{}cccccccccccccc@{}}
    \toprule
    \multicolumn{1}{c|}{\multirow{1}{*}{Context Modules}}   & \multicolumn{1}{c}{MLIC$^{++}$} & \multicolumn{1}{c}{Case 1}  & \multicolumn{1}{c}{Case 2}  & \multicolumn{1}{c}{Case 3} & \multicolumn{1}{c}{Case 4} & \multicolumn{1}{c}{Case 5}    & \multicolumn{1}{c}{Case 6}    & \multicolumn{1}{c}{Case 7}    & \multicolumn{1}{c}{Case 8}    \\\midrule
    \multicolumn{1}{c|}{Channel context module $g_{ch}$}                           & \checkmark                  & \checkmark    & \checkmark   & \checkmark            & \checkmark                & \checkmark                & \checkmark                & \checkmark                &         \\\midrule
    \multicolumn{1}{c|}{Checkerboard context module $g_{lc,ckbd}$}                      &                     &        &    &         &                   &                   & \checkmark                &                   &       \\\midrule
    \multicolumn{1}{c|}{Checkerboard attention context module $g_{lc,attn}$}                      & \checkmark                  & \checkmark    & \checkmark      & \checkmark          & \checkmark                & \checkmark                &                   &                   &         \\\midrule
    \multicolumn{1}{c|}{Linear intra-slice global spatial context module $g_{gc,intra}$}                     & \checkmark                  & \checkmark     &    &  \checkmark           & \checkmark                &                   &                   &                   &         \\\midrule
    \multicolumn{1}{c|}{Linear inter-slice global spatial context module $g_{gc,inter}$}                     & \checkmark                  &      & \checkmark    &              &                   &                   &                   &                   &         \\\midrule
    \multicolumn{1}{c|}{Position embedding}                 & \checkmark                  & \checkmark       & \checkmark        & \checkmark                  &                   &                   &                   &  &         \\\midrule
    \multicolumn{1}{c|}{DepthRB}                 & \checkmark                  & \checkmark       & \checkmark        &    & \checkmark               &                   &                   &                   &         \\\midrule
    \multicolumn{1}{c}{}                                    & \bm{$0.00$}                            & \bm{$+3.15$}  & \bm{$+2.79$}   & \bm{$+4.37$}                     & \bm{$+4.10$}        & \bm{$+12.92$}                                    &    \bm{$+15.24$}                           &  \bm{$+17.45$}                    & \bm{$+22.34$}        \\\midrule
    \end{tabular}}
    \caption{Ablation Studies on Kodak. BD-rate (\%) is employed to evaluate their contributions. MLIC$^{++}$ is the anchor.}
    \label{tab:ablation}
  \end{table*}
  \subsection{Computational Complexity}
  The computational complexity of models is measured in four aspects,
  including test GPU memory consumption, encoding time, decoding time, and forward Multiply-Accumulate operations (MACs).
  These metrics provide a comprehensive evaluation of the complexity from
  various perspectives, with particular emphasis on the first three metrics
  due to their direct relevance to real-world scenarios.
  It is worth noting that a model with lower MACs may
  exhibit \textit{slower} encoding and decoding speeds if the context module is \textit{serial} and consume
  a larger amount of GPU memory. Therefore,
  while MACs serves as an important measure of complexity,
  it should be considered in conjunction with the other
  metrics to obtain a more complete understanding of the characteristics of the model.
  \begin{figure*}[t]
    \centering
    \includegraphics[width=\linewidth]
    {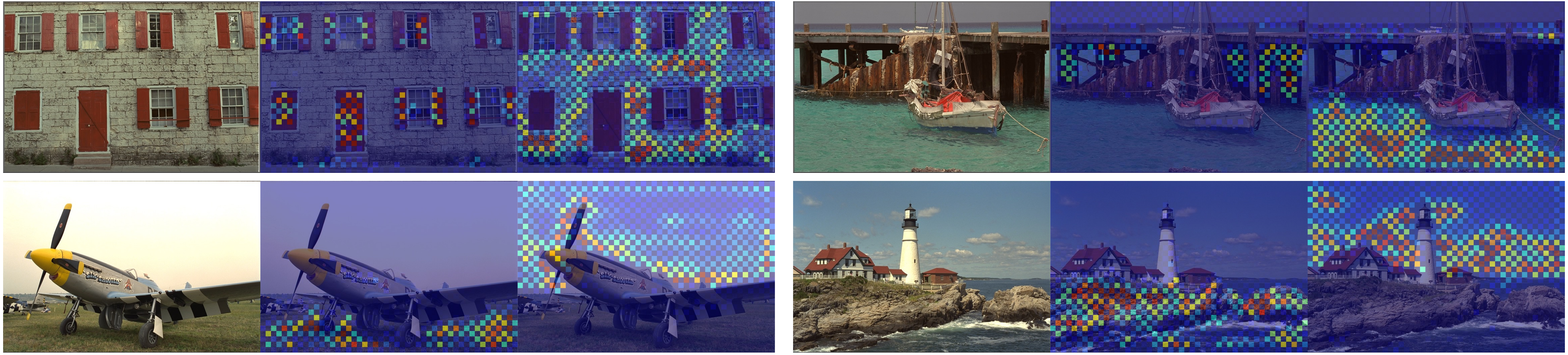}
    \caption{Intra-slice global spatial attention map $\textrm{softmax}_2\left(\hat{\boldsymbol{y}}^{i}_{na,q}\right)\left[j\right]\times \textrm{softmax}_1\left(\hat{\boldsymbol{y}}^{i}_{ac,k}\right)^\top$
    of Kodim01, Kodim11, Kodim20, Kodim21 from Kodak~\cite{kodak} dataset, where $i$ is the index of selected slice, $j$ is the index of selected query.
    Because interactions within anchor and non-anchor part are not allowed, the attention map is checkerboard-like.}
    \label{fig:intra_attn_map}
    \end{figure*}
    \begin{figure*}[t]
      \centering
      \includegraphics[width=\linewidth]
      {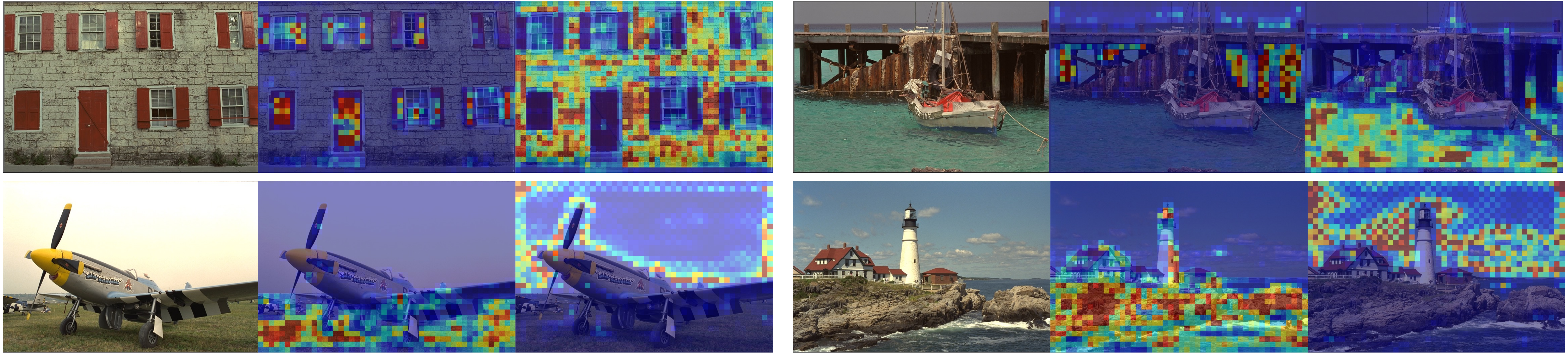}
      \caption{Inter-slice global spatial attention map $\textrm{softmax}_2\left(\hat{\boldsymbol{y}}^{i}_q\right)\left[j\right]\times \textrm{softmax}_1\left(\hat{\boldsymbol{y}}^{i}_k\right)^\top$
      of Kodim01, Kodim11, Kodim20, Kodim21 from Kodak~\cite{kodak} dataset, where $i$ is the index of selected slice, $j$ is the index of selected query.}
      \label{fig:inter_attn_map}
      \end{figure*}
    In order to better fit the real-world scenario,
    we test the model complexities in the case of different resolution images as inputs.
    We select $16$ images with resolution larger than $3584\times 3584$ from LIU4K test dataset~\cite{liu2020comprehensive} and we center crop
    these images to $\{512\times 512, 768\times 768, 1024\times 1024, 1536\times 1536,
    2048\times 2048, 2560\times 2560, 3072\times 3072, 3584\times 3584\}$ patches.
  We compared our MLIC$^{++}$ with our prior work MLIC$^{+}$ which employs \textit{vanilla} attention to illustrate the advantages of
    proposed \textit{linear} complexity global context capturing. We also compare our proposed MLIC$^{++}$ with
    recent learned image compression models~\cite{cheng2020learned,qian2020learning,xie2021enhanced,qian2022entroformer,wang2022neural,zou2022the,he2022elic,liu2023learned}.
    The experiments are conducted on a Tesla A100 GPU and a Xeon(R) Platinum 8260C CPU.
    The results are illustrated in Fig.~\ref{fig:complex}.\par
    \subsubsection{On GPU Memory comsuption}
    The quadratic complexity of vanilla attention leads to significantly more memory consumptions
    on high resolution image coding. When compressing $2048\times 2048$ images,
    MLIC$^{+}$ consumes nearly $22.37$ GB GPU memory. Our proposed linear complexity global
    context capturing significantly reduces the consumption of GPU memory as our proposed
    MLIC$^{++}$ only takes about $5.55$ GB GPU memory to compress a $2048\times 2048$ image.
    When compressing $3072\times 3072$ images, MLIC$^{+}$ consumes $46.38$ GB GPU memory while our
    proposed MLIC$^{++}$ only consumes $7.69$ GB GPU memory.
    Compared with recent LIC-TCM~\cite{liu2023learned}, our MLIC$^{++}$ consumes
    $\frac{1}{2}$ of the GPU memory consumed by LIC-TCM when compressing
    $2048\times 2048$ images.
    When compressing $2560\times 2560$ images, our MLIC$^{++}$ only consumes about $\frac{1}{4}$
    of the GPU memory consumed by LIC-TCM.
    When compressing a $3584\times 3584$ image, the GPU memory consumption of our proposed
    MLIC$^{++}$ is still only $12.3$ GB while the LIC-TCM requires $45.95$ GB GPU memory.
    The curve of the GPU memory consumed by our MLIC$^{++}$ as the resolution grows is also much flatter.
    \subsubsection{On Encoding and Decoding Time}
    When counting encoding time and decoding time, entropy coding and
    entropy decoding time are included. Since our MLIC$^{++}$ does not employ
    pixel-cnn-like spatial context capturing, our
    MLIC$^{++}$ encodes and decodes much faster than Cheng'20,
    Xie'21, and Entroformer. The linear complexity global context capturing leads to significant computational overhead reductions
    on high resolution images when compared with quadratic complexity based global context capturing.
    Compared with vanilla-attention based method,
    MLIC$^{++}$ encodes, decodes faster than MLIC$^{+}$ on $\{768\times 768, 1024\times 1024, 1536\times 1536, 2048\times 2048, 2560\times 2560\, 3072\times 3072, 3584\times 3584\}$ images.
    The time of MLIC$^{++}$ to encode $2560\times 2560$ images is about $\frac{1}{2}$ of
    the time of MLIC$^{+}$. Compared with recent LIC-TCM, MLIC$^{++}$
    needs more time to decode low resolution images while needs less time to encode high-resolution
    images, which can be attributed to fewer slices in LIC-TCM.
    At smaller resolutions, the bottleneck of encoding or decoding time
    is the number of slices, since the encoding and decoding of slices is serial, however,
    at larger resolutions, the bottleneck is no longer the number of slices
    but the overall computational complexity due to the high computational
    overhead of each slice.
    \subsubsection{On Forward MACs}
    Compared with MLIC$^{+}$,
    our MLIC$^{++}$, which employs the proposed linear complexity global spatial context
    modules, has lower MACs.
    Compared with the recent LIC-TCM~\cite{liu2023learned}, our MLIC$^{++}$ demonstrates significantly reduced MACs.
    Compared with ELIC~\cite{he2022elic}, STF~\cite{zou2022the}, and
    WACNN~\cite{zou2022the}, our MLIC$^{++}$ has higher MACs when the input is high-resolution image.
    One contributing factor to the higher MACs is the utilization of Cheng'20~\cite{cheng2020learned}
    as the basis for the transform module in our proposed MLIC$^{++}$.
    While the context module in Cheng'20 is simpler than that of ELIC,
    STF, and WACNN, the overall MACs of Cheng'20 are
    higher due to its transform modules having higher MACs.
    Since our MLIC$^{++}$ employs a modification of analysis transform and synthesis
    transform of Cheng'20, it results in higher MACs compared to ELIC and WACNN.
    In addition, since our MLIC$^{++}$ employed an more advanced entropy model,
    which leads to higher MACs. However, the entropy model has modest effect on the overall MACs
    since the input image is down-sampled for four times in analysis transform,
    which implies that designing more advanced entropy models is more resource-efficient.\par
    \subsection{Ablation Studies}
    \label{sec:ablation}
    \subsubsection{Settings}
    We conduct corresponding ablation studies and evaluate the contribution
    of each module on Kodak~\cite{kodak} dataset.
    Each model is optimized for MSE. We train each model for 2M steps. We use the training strategy in Section~\ref{sec:exp:setup}.
    \subsubsection{Analysis of Channel-wise Context Module}
    The inclusion of a channel-wise context module yields a substantial performance improvement
    when compared to Case 8, which solely incorporates hyper-priors, Case 7,
    incorporating channel-wise context modules, achieves a further reduction of $4.89\%$ in bit-rate.
    The channel-wise context module has the capability to reference symbols in the same and nearby positions
    in the preceding slices. The effectiveness of the channel-wise context
    module provides evidence of redundancy among channels.
    \subsubsection{Analysis of Local Spatial Context Module}
    The vanilla checkerboard context module leads to slight performance degradation~\cite{he2021checkerboard} compared
    with pixel-cnn-like serial context modules~\cite{van2016conditional,minnen2018joint}.
    The vanilla checkerboard context module contains one convolutional layer which is linear.
    The other drawback is the fixed kernel weights during inference.
    In contrast, our proposed checkerboard attention-based local spatial context module
    is non-linear and incorporates dynamic attention map generation with two-pass decoding,
    allowing for improved flexibility and adaptability. Our
    proposed checkerboard attention-based local spatial context module achieves a further reduction of
    $2.32\%$ in bit-rate compared to vanilla checkerboard context module. Compared with Case 7,
    which solely incorporates channel-wise context modules,
    models incorporating both local spatial and channel context modules demonstrate superior performance,
    further validating the presence of redundancy in the local spatial domain.
    \subsubsection{Analysis of Intra-Slice Global Context Module}
    The $i$-th slice is used as an example to illustrate the process.
    In practice, the computation of $\textrm{softmax}_1\left(\hat{\boldsymbol{y}}^{i-1}_{ac,k}\right)^{\top}\hat{\boldsymbol{y}}^i_{ac,v}$ in
    Equation~\ref{eq:efficient} is
    performed first for \textit{linear} complexity. However, we can still compute
    $\textrm{softmax}_2\left(\hat{\boldsymbol{y}}^{i}_{na,q}\right)\left[j\right]\times \textrm{softmax}_1\left(\hat{\boldsymbol{y}}^{i}_{ac,k}\right)^\top$
    as the attention map to validate the ability to capture global dependencies since
    the $\textrm{softmax}_2\left(\hat{\boldsymbol{y}}^{i}_{na,q}\right)\left[j\right]\times \textrm{softmax}_1\left(\hat{\boldsymbol{y}}^{i}_{ac,k}\right)^\top$
    is employed as the \textit{implicit} similarity metric, where $j$ is the index of selected query.
    The attention maps of Kodim01, Kodim11, Kodim 20, Kodim21 captured by proposed intra-slice global spatial context module $g_{gc,intra}$  is illustrated in Fig.~\ref{fig:intra_attn_map}.
    The checkerboard-like pattern in the attention map arises due to the absence of interactions within the anchor and non-anchor parts.
    Our model successfully captures distant correlations between the anchor and non-anchor parts,
    which local context modules are unable to achieve.
    Although our intra-slice global context module may bear some resemblance to cross-attention models,
    we focus solely on the interactions within a single slice.
    We only use the attention map of $\hat {\boldsymbol{y}}^{i-1}$ to predict correlations in $\hat {\boldsymbol{y}}^i$.
    When our proposed global context modules collaborate with local spatial context modules,
    the overall performance is further improved, underscoring the necessity of
    global spatial context modules for capturing global correlations
    and local spatial context modules for capturing local correlations.
    \subsubsection{Analysis of Inter-Slice Global Context Module}
    Taken $i$-th slice as an example,in linear complexity inter-slice global context module $g_{gc,inter}$
    computes $\textrm{softmax}_1\left(\hat{\boldsymbol{y}}^{i}_k\right)^\top\times \hat{\boldsymbol{y}}^{i}_v$ first.
    However, the $\textrm{softmax}_2\left(\hat{\boldsymbol{y}}^{i}_q\right)\left[j\right]\times \textrm{softmax}_1\left(\hat{\boldsymbol{y}}^{i}_k\right)^\top$
    can be computed as the attention map to validate the ability to capture inter-slice global dependencies, where $j$
    is the index of the selected query.
    The attention maps of Kodim01, Kodim11, Kodim20, Kodim21 captured by our inter-slice are illustrated in Fig.~\ref{fig:inter_attn_map}.
    The visualized attention maps clearly demonstrate the effective capture of global dependencies by our $g_{gc,inter}$,
    despite the model being trained in an \textit{implicit} manner.
    When $g_{gc, inter}$ collaborates with the intra-slice global context module $g_{gc,intra}$,
    teh local spatial context module, and the channel-wise context module, the rate-distortion
    performance of the model is further enhanced, which demonstrate the effectiveness
    of inter-slice global context module.
    \begin{figure}[t]
      \centering
      \includegraphics[width=\linewidth]
      {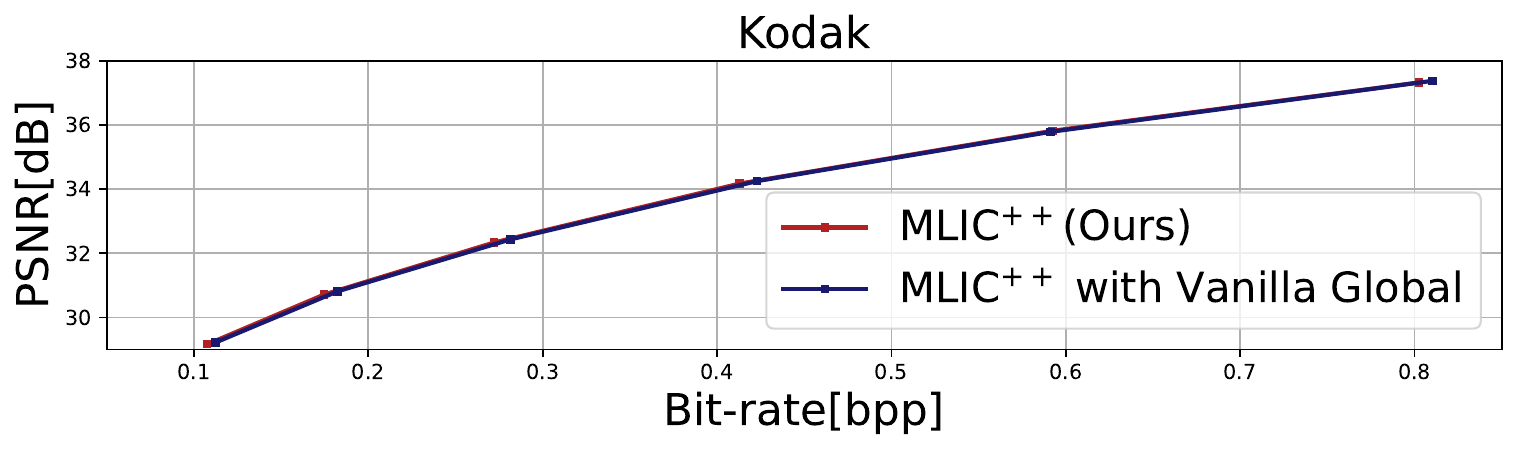}
      \caption{Rate-distortion performance comparisons between MLIC$^{++}$ and MLIC$^{{++}}$ with vanilla quadratic complexity global spatial context modules.}
      \label{fig:vanilla}
      \end{figure}
      \begin{table}[t]
        \centering
        \scriptsize
        \setlength{\tabcolsep}{4mm}{
        \begin{tabular}{@{}cccccccccccccc@{}}
        \toprule
        \multicolumn{1}{c|}{\multirow{1}{*}{Patch Size}}   & \multicolumn{1}{c}{$256\times 256$} & \multicolumn{1}{c}{$320\times 320$}   & \multicolumn{1}{c}{$512\times 512$}       \\\midrule
        \multicolumn{1}{c|}{}                                    & \bm{$0.00$}                            & \bm{$-3.66$}                               & \bm{$-6.12$}                                    \\\midrule
        \end{tabular}}
        \caption{BD-RATE(\%) of training using different patch size. The anchor is MLIC$^{++}$ trained on $256\times 256$ patches.}
        \label{tab:patch_size}
      \end{table}
      \begin{table}[t]
        \centering
        \scriptsize
        \setlength{\tabcolsep}{2mm}{
        \begin{tabular}{@{}cccccccccccccc@{}}
        \toprule
        \multicolumn{1}{c|}{\multirow{1}{*}{Context Module}}   & \multicolumn{1}{c}{$g_{ch}$} & \multicolumn{1}{c}{$g_{lc,attn}$}   & \multicolumn{1}{c}{$g_{gc,intra}$}   & \multicolumn{1}{c}{$g_{gc,inter}$}    \\\midrule
        \multicolumn{1}{c|}{KParams}                                    & \bm{$5810.11$}                            & \bm{$755.2$}                               & \bm{$732.5$}              &    \bm{$4633.34$}                     \\\midrule
        \multicolumn{1}{c|}{MMACs}                                    & \bm{$8925.18$}                            & \bm{$1148.2$}                               & \bm{$1116.2$}              &         \bm{$7100.50$}                \\\midrule
        \multicolumn{1}{c|}{{Inference Time (s)}}                                 & {\bm{$0.0036$}}                            & {\bm{$0.0094$}}                                  & {\bm{$0.0179$}}                  &         {\bm{$0.0130$}}                   \\\midrule  
      \end{tabular}}
      \caption{{Parameters, forward Macs and inference time of context modules on Kodak.}}
        \label{tab:context_complex}
      \end{table}
      \subsubsection{Analysis on Learnable Position Embedding and DepthRB}
      The position embedding and DepthRB lead to performance gains as illustrated in Table~\ref{tab:ablation}.
      Specifically, when DepthRB is not employed, a FFN is utilized instead. 
      The learnable position embedding is flexible because it is data-driven.
      Other position embedding method, Sinusoidal Position Embedding~\cite{vas2017attention},
      is tried and it leads negligible performance difference compared to model
      without position embedding. Relative Position Embedding~\cite{shaw2018self} is employed on
      attention map, which cannot be employed in our approach because
      the $\textrm{softmax}_1\left(\hat{\boldsymbol{y}}^{i-1}_{ac,k}\right)^{\top}\hat{\boldsymbol{y}}^{i}_{ac,v}$
      and $\textrm{softmax}_1\left(\hat{\boldsymbol{y}}^{< i}_k\right)^{\top}\hat{\boldsymbol{y}}^{< i}_v$ are
      computed first in our approach for linear complexity instead of the attention map
      $\textrm{softmax}_2\left(\hat{\boldsymbol{y}}^{i}_{na,q}\right)\left[j\right]\times \textrm{softmax}_1\left(\hat{\boldsymbol{y}}^{i}_{ac,k}\right)^\top$
      and $\textrm{softmax}_2\left(\hat{\boldsymbol{y}}^{i}_q\right)\left[j\right]\times \textrm{softmax}_1\left(\hat{\boldsymbol{y}}^{i}_k\right)^\top$.
      The learnable position embedding is not employed in our checkerboard attention-based local spatial context module $g_{lc,attn}$,
      because the performance improvement is quite negligible.
      In our $g_{lc,attn}$, the feature is partition into overlapped windows with zero padding.
      The zero padding and boundary effects imply that there is no need to insert a position embedding layer.
      \subsubsection{Comparisons with Model with Vanilla Global Spatial Context Modules}
      The comparison between MLIC$^{++}$ and MLIC$^{++}$ with vanilla quadratic complexity global spatial context modules is
      illustrated in Fig.~\ref{fig:vanilla}. It is evident that modeling global spatial contexts using linear complexity attention
      mechanisms does not result in performance degradation when compared to the vanilla attention mechanism.
      \subsubsection{Analysis of Training with Large Patches}
      In our training strategy, we use $512\times 512$ patches to train MLIC$^{++}$
      during the rest $0.8$M steps.
      We compare the differences in rate-distortion performance between different patch sizes
      $\{256\times 256, 320\times 320, 512\times 512\}$
      during the rest $0.8$M steps in Table~\ref{tab:patch_size}.
      Using large patches further boost the model performance.
      Using $256\times 256$ patches cannot fully exploit the performance of the model.
      The size of latent representation is $16\times 16$ if $256\times 256$ patches
      are adopted. $16\times 16$ latent representation is insufficient for model to learn
      long-range or global dependency as the resolutions of input images of the codec could be 2K or 4K.
      Moreover, checkerboard partition~\cite{he2021checkerboard} is employed, which
      makes the attention map sparse. Therefore, it is required to adopt large patches for better performance.
      Considering the overhead of training and model performance,
      adopting $512\times512$ patches is the best choice.
      \subsubsection{Comparisons among Different Context Modules}
      As illustrated in Table~\ref{tab:ablation}, the proposed global context modules
      $g_{gc,intra}$ and $g_{gc,inter}$ contribute most to performance,
      channel-wise context module $g_{ch}$ has the second highest contribution to performance, and
      local sptail context module $g_{lc,attn }$ has the lowest contribution to performance.
      Since each context module has a different role to perform, it is imperative that they work together
      for performance enhancement. The complexity of each context module is presented
      in Table~\ref{tab:context_complex}. The channel-wise context module
      has the most parameters and highest MACs.
      However, the MACs of the channel context module are only $1.77\%$ of the total MACs.
      The total MACs of all context modules are only $3.63\%$ of the total MACs.
      {Although $g_{ch}$ has high macs, its inference is fast due to the $3\times 3$ convolution being highly optimized on GPUs with CUDA/cuDNN\footnote[4]{\url{https://github.com/pytorch/pytorch/blob/main/aten/src/ATen/native/Convolution.cpp}}
while the window attention in $g_{lc,attn}$ and linear attention $g_{gc,intra}$ and $g_{gc,inter}$ are implemented in pure Python/Pytorch}\footnote[5]{\url{https://github.com/JiangWeibeta/MLIC/blob/main/MLIC\%2B\%2B/modules/transform/context.py}}.
{The $g_{gc,intra}$ consumes most time during inference could be attributed the partition of anchor part and non-anchor part.}
      \section{Conclusion}
      \label{sec:conclusion}
      In this paper, we propose a novel approach for capturing local spatial context using checkerboard attention,
      as well as linear complexity intra-slice and inter-slice global context modules,
      which significantly enhance the performance of the model while maintaining an acceptable \textit{linear} complexity.
      Based on proposed context modules, we propose linear complexity
      multi-reference entropy model MEM$^{++}$.
      Building upon MEM$^{++}$, we obtain state-of-the-art model MLIC$^{++}$.
      MLIC$^{++}$ exhibits \textit{linear} GPU memory consumption with resolution, 
      making it highly suitable for high-resolution image coding.
      To make our MLIC$^{++}$ more practical,
      our future work will focus on investigating the asymmetrical design~\cite{yang2023asymmetrically}
      between the analysis and synthesis transforms,
      as well as lighter linear complexity multi-reference entropy model.
\section*{Acknowledgement}
This work is financially supported by Guangdong Provincial Key Laboratory of Ultra High Definition Immersive Media Technology (Grant No. 2024B1212010006),
National Natural Science Foundation of China U21B2012, Shenzhen Science and Technology Program-Shenzhen Cultivation of Excellent Scientific and Technological Innovation Talents project (Grant No. RCJC20200714114435057), 
this work is also financially supported for Outstanding Talents Training Fund in Shenzhen.
\bibliography{mlicpp}
\bibliographystyle{icml2023}

\newpage
\appendix
\onecolumn


\end{document}